\newlength{\figwidth}
\definecolor{mygreen}{rgb}{0,0.6,0}
\definecolor{mygray}{rgb}{0.1,0.1,0.1}
\definecolor{mymauve}{rgb}{0.58,0,0.82}
\newcommand*\sqnew[1][0.8ex]{\tikz\draw[thick,draw=green!60!black,fill=green!60!black] (0,0) rectangle (0.175,0.175);}
\newcommand*\emptycirc[1][0.8ex]{\tikz\draw[thick,draw=black] (0,0) circle (#1);} 
\newcommand*\halfcirc[1][0.8ex]{%
  \begin{tikzpicture}
  \draw[fill=black!60!black,draw=black!60!black] (0,0)-- (90:#1) arc (90:270:#1) -- cycle ;
  \draw[thick,draw=black!60!black] (0,0) circle (#1);
  \end{tikzpicture}}
\newcommand*\fullcirc[1][0.8ex]{\tikz\draw[thick,draw=black!60!black,fill=black!60!black] (0,0) circle (#1);} 
\tikzset{nomorepostaction/.code=\let\tikz@postactions\pgfutil@empty}
\newcommand{\avecq}{\ensuremath{\mathsf{AVeCQ}}~}
\def\slashedarrowfill@#1#2#3#4#5{%
  $\m@th\thickmuskip0mu\medmuskip\thickmuskip\thinmuskip\thickmuskip
  \relax#5#1\mkern-7mu%
  \cleaders\hbox{$#5\mkern-2mu#2\mkern-2mu$}\hfill
  \mathclap{#3}\mathclap{#2}%
  \cleaders\hbox{$#5\mkern-2mu#2\mkern-2mu$}\hfill
  \mkern-7mu#4$%
}
\def\rightslashedarrowfilla@{%
  \slashedarrowfill@\relbar\relbar{\raisebox{1.2pt}{$\scriptscriptstyle\diagup$}}\rightarrow}
\newcommand\xslashedrightarrowa[2][]{%
  \ext@arrow 0055{\rightslashedarrowfilla@}{#1}{#2}}
\def\rightslashedarrowfillb@{%
  \slashedarrowfill@\relbar\relbar/\rightarrow}
\newcommand\xslashedrightarrowb[2][]{%
  \ext@arrow 0055{\rightslashedarrowfillb@}{#1}{#2}}
\def\rightslashedarrowfillc@{%
  \slashedarrowfill@\relbar\relbar{\raisebox{.12em}{\tiny/}}\rightarrow}
\newcommand\xslashedrightarrowc[2][]{%
  \ext@arrow 0055{\rightslashedarrowfillc@}{#1}{#2}}
\newcolumntype{x}[1]{>{\centering\arraybackslash\hspace{0pt}}p{#1}}
\def\BibTeX{{\rm B\kern-.05em{\sc i\kern-.025em b}\kern-.08em
    T\kern-.1667em\lower.7ex\hbox{E}\kern-.125emX}}
\def\BibTeX{{\rm B\kern-.05em{\sc i\kern-.025em b}\kern-.08em
    T\kern-.1667em\lower.7ex\hbox{E}\kern-.125emX}}
\newcolumntype{L}{>{\RaggedRight}X} 
\tikzset{%
    pics/sema/.style args={#1/#2/#3}{code={%
        \ifstrequal{#2}{0}{%
            \node[circle,minimum width=.5mm,draw,fill=#1] {};
        }{%
            \tkzDefPoint(0,0){O}
            \tkzDrawSector[R,fill=#1](O,.6mm)(90,90-#2)
            \tkzDrawSector[R,fill=#3](O,.6mm)(90-#2,90-360)
    }
    }},
}
\def\BibTeX{{\rm B\kern-.05em{\sc i\kern-.025em b}\kern-.08em
    T\kern-.1667em\lower.7ex\hbox{E}\kern-.125emX}}
\newtheorem{theorem}{Theorem}
\newtheorem{definition}{Definition}
\newcommand\xrowht[2][0]{\addstackgap[.5\dimexpr#2\relax]{\vphantom{#1}}}
\begin{document}

\title{$\mathsf{AVeCQ}$: Anonymous Verifiable Crowdsourcing with~Worker~Qualities}

\date{}

\author{\rm Sankarshan Damle\\
{IIIT Hyderabad} \\
{sankarshan.damle@research.iiit.ac.in} 
\and 
 {\rm Vlasis Koutsos, Dimitrios Papadopoulos} \\
{HKUST} \\ \{vkoutsos,dipapado\}@cse.ust.hk
\and 
{\rm Dimitris Chatzopoulos} \\
{Univeristy College of Dublin} \\
  {dimitris.chatzopoulos@ucd.ie}
\and 
{\rm Sujit Gujar} \\
{IIIT Hyderabad} \\
  {sujit.gujar@iiit.ac.in}
}

\maketitle

\begin{abstract}
  In crowdsourcing systems, requesters publish tasks, and interested workers provide answers to get rewards. 
    \emph{Worker anonymity} motivates participation since it protects their privacy. 
    \emph{Anonymity with unlinkability} is an enhanced version of anonymity because it makes it impossible to ``link'' workers across the tasks they participate in. 
    Another core feature of crowdsourcing systems is \emph{worker quality} which expresses a worker's trustworthiness and quantifies their historical performance. Notably, worker quality depends on the participation history, revealing information about it, while unlinkability aims to disassociate the workers' identities from their past activity. 
    In this work, we present $\mathsf{AVeCQ}$, the first crowdsourcing system that reconciles these properties, achieving enhanced anonymity and verifiable worker quality updates. 
    \avecq relies on a suite of cryptographic tools, such as zero-knowledge proofs, to \emph{(i)} guarantee workers' privacy, \emph{(ii)} prove the correctness of worker quality scores and task answers, and \emph{(iii)} commensurate payments. \avecq is developed modularly, where the requesters and workers communicate over a platform that supports pseudonymity, information logging, and payments. In order to compare \avecq with the state-of-the-art, we prototype it over Ethereum. \avecq outperforms the state-of-the-art in three popular crowdsourcing tasks (image annotation, average review, and Gallup polls). For instance, for an Average Review task with $5$ choices and $128$ participating workers \avecq is 40\% faster (including overhead to compute and verify the necessary proofs and blockchain transaction processing time) with the task's requester consuming 87\% fewer gas units. 
\end{abstract}


\section{Introduction}

    \emph{Crowdsourcing} is the process of gathering information regarding a \emph{task} (e.g., a query or project) by leveraging a set of agents who are incentivized to work on them within a specific time frame~\cite{howe2006rise}.
    A prominent example of crowdsourcing revolves around Human Intelligence Tasks (HITs), which can be used to enrich datasets designed for empowering machine learning models.
    Those who request crowdsourcing tasks can extract statistical data, form conclusions, and even monetize from any results based on the individually-provided answers~\cite{marcus2012counting}.
    
    Specifically, a popular use case is the calculation of the average over a set of values. Representative examples can be found in personal data analytics (e.g., average salary calculation), smart agriculture (average crop collection), smart grid (average daily energy consumption), and others~\cite{zheng2017truth,sun2018truth,welinder2010multidimensional,he2016ups}.
    Other motivating tasks revolve around calculating a set's $n$-most popular items. 
    E.g., for $n=1$ this encompasses image annotation~\cite{krivosheev2020detecting,shah2015double}, while for $n>1$ Gallup polls~\cite{perez2022dataset}. 
    
    More concretely, a \emph{requester} publishes a task seeking information and \emph{workers} provide their responses.
    The requester can define a \emph{task policy} specifying various task parameters (e.g., the task description, a final answer calculation mechanism, a minimum number of participating workers). 
    To \emph{incentivize} participation, requesters may compensate workers
    ~\cite{MCS,gt1,Location}.  
    E.g., the workers may be compensated based on a flat rate or even based on how ``close'' their responses are to the final answer of the task. 
    Such compensation mechanisms can be specified in the task policy and in fact, there exists a plethora of deployed crowdsourcing systems that operate in this paradigm (e.g., Amazon MTurk~\cite{mturk}, Microwork~\cite{microwork}, and QMarkets~\cite{qmarkets}).
    
    \begin{figure}[t]
        \centering
        \includegraphics[width=0.9\columnwidth]{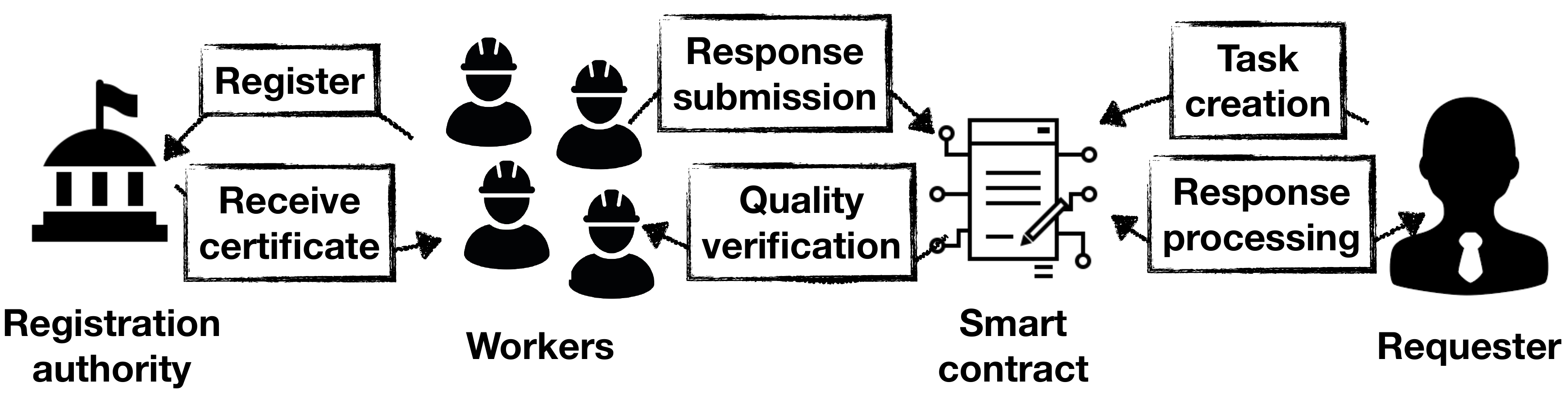}
     
        \caption{The crowdsourcing setting and phases of \avecq}
        \label{fig:overall}

    \end{figure}
    
    A common feature in crowdsourcing systems is that workers are individually associated with a \emph{quality score}, which estimates how ``trustworthy'' their responses are based on prior performance~\cite{peer2014reputation}.
    Indeed, worker qualities can be a vital tool for a requester, who can use them to screen workers (e.g., specifying a certain threshold so that only a worker whose quality surpasses it may participate), or even pay them according to their quality scores. 
    A worker's quality score (we refer to it as \emph{quality}) is dynamic, as it may change after every participation, depending on the task policy. 
    In practice, qualities represent the performance of workers in previous tasks. Thus, they can ultimately assist in ``ostracizing'' workers who submit answers without judiciously performing tasks~\cite{MCS,Gom1,Farm}. 
    E.g., workers who submit arbitrary answers will have their quality decrease over time, making it increasingly harder to clear quality thresholds and participate in future tasks. 
    
    \smallskip\noindent\textbf{Privacy in Crowdsourcing.} 
        Protecting worker's private information is greatly important~\cite{kandappu2015privacy}.
        In practice, task participation may require workers to disclose sensitive data to requesters (e.g., location, age, gender, race). Being able to observe the answer pattern of a specific worker is therefore undesirable; in fact, it opens the system up to worker profiling~\cite{kandappu2015privacy} and potential discrimination! For example, suppose Alice deploys a task requesting workers to disclose their racial background. After Bob provides such information, solely based on this, Alice may choose to exclude Bob from her future tasks.     
        
        Motivated by this, a line of works has emerged that studies privacy in crowdsourcing systems and proposes corresponding solutions~\cite{wang2016incentive,wang2018geographic,erlingsson2014rappor,wang2016differential,privacy2019ijcai,lu2018zebralancer,to2014framework,dynamo2015,li2018crowdbc}. 
        The required property in these works is \emph{worker anonymity}: it should be impossible to deduce a worker's identity from information revealed \emph{while participating in a task}. 
        A ``naive'' way to achieve anonymity would be to hide workers' identities behind pseudonyms.
        However, this is not enough as, through a series of tasks, requesters may still be able to identify workers on the basis of their answers alone.
        If a worker participates in multiple tasks, requesters may be able to build a ``rich'' profile \emph{linked} to a certain pseudonym.
        To avoid such a case, prior works~\cite{6888897,lu2018zebralancer} consider a stronger privacy notion, \emph{anonymity with unlinkability}: It should be impossible to link a worker's participation across tasks.
        Throughout the rest of the paper we refer to anonymity as in this ``stronger'' variant.
        
        We now make the following observations based on the discussion above, regarding qualities and anonymity. 
        On one hand, qualities directly stem from workers' participation profiles. 
        On the other hand, anonymity aims to obscure all past participation information. 
        Thus, the two properties appear to be \emph{inherently contradictory}: achieving one seemingly precludes the other. 
        In fact, there exist works that achieve anonymous crowdsourcing without worker quality~\cite{lu2018zebralancer,to2014framework,dynamo2015} and vice versa~\cite{li2018crowdbc} (see Table~\ref{tab:threeparttable}). To the best of our knowledge, no prior work simultaneously achieves both.
        
        \begin{table}[t]
            \centering
            \footnotesize
            \begin{threeparttable}
                \begin{tabular}{p{7em}p{2em}p{2em}p{2em}p{2em}p{2em}p{2em}}
                    & \rotatebox[origin=l]{45}{\raisebox{9pt}{\hspace{-0.5em}\multirow{2}{4.5em}{\textbf{Worker} \\ \textbf{Quality}}}} & \rotatebox[origin=l]{45}{\raisebox{8pt}{\hspace{-0.8em}\multirow{2}{4.5em}{\textbf{Anonymity}}}} & \rotatebox[origin=l]{45}{\raisebox{9pt}{\hspace{-0.5em}\multirow{2}{4.5em}{\textbf{Policy} \\ \textbf{Verifiable}}}} & \rotatebox[origin=l]{45}{\raisebox{9pt}{\hspace{-0.5em}\multirow{2}{4.5em}{\textbf{Data Con-} \\ \textbf{fidentiality}}}} & \rotatebox[origin=l]{45}{\raisebox{9pt}{\hspace{-0.5em}\multirow{2}{5em}{\textbf{Sybil} \\ \textbf{Resistance}}}}  & \rotatebox[origin=l]{45}{\raisebox{9pt}{\hspace{-0.5em}\multirow{2}{5em}{\textbf{Free-rider}\\ \textbf{Resistance}}}}  \\
                    \midrule
                    Yan \emph{et al.}~\cite{yan2021verifiable}  & \emptycirc & \emptycirc & $\fullcirc^{\boldsymbol{\dagger}}$ & \fullcirc &\fullcirc  & \fullcirc \\
                    {Duan \emph{et al.}~\cite{duan2019aggregating}} & \emptycirc & \emptycirc & $\fullcirc^{\boldsymbol{\ddagger}}$ & \fullcirc & \emptycirc  & \fullcirc \\
                    \hdashline
                    PACE~\cite{pace21}  & \emptycirc$^{\bigstar}$ & \emptycirc & $\fullcirc^{\boldsymbol{\dagger}} $ & \fullcirc & \emptycirc & $\fullcirc^{\boldsymbol{\dagger}}$\\ 
                    TrustWorker~\cite{gao2021trustworker}  & \emptycirc$^\bigstar$ & \emptycirc & $\fullcirc^{\boldsymbol{\dagger}}$ & \emptycirc & \fullcirc & \emptycirc \\
                    {Dragoon~\cite{lu2020dragoon}} & \emptycirc$^\bigstar$ & \emptycirc &  \fullcirc & \fullcirc  & \fullcirc & \fullcirc \\
                      {bHIT~\cite{bhitVLDB22}} & \emptycirc$^\bigstar$ & \emptycirc &  \fullcirc  & \fullcirc  & \fullcirc & \fullcirc \\
                    \hdashline
                    zkCrowd~\cite{zhu2019zkcrowd} & \emptycirc & \fullcirc & $\fullcirc^{\boldsymbol{\ddagger}/\blacklozenge}$ & \fullcirc  & \fullcirc  & \fullcirc \\ 
                    {ZebraLancer~\cite{lu2018zebralancer}}  &  \emptycirc & \fullcirc & \fullcirc  & \fullcirc  & \fullcirc &  \fullcirc \\
                    \hdashline
                    BPRF~\cite{jo2019bprf} &\fullcirc & \halfcirc & $\fullcirc^\blacklozenge$ & \emptycirc  & \fullcirc  & \fullcirc \\
                    \hline
                    \rule{0pt}{1\normalbaselineskip}$\mathsf{\textbf{AVeCQ}}$ & \fullcirc & \fullcirc & \fullcirc & \fullcirc & \fullcirc & \fullcirc\\
                    \bottomrule
                \end{tabular}
                
                \hspace{0.8em}\emptycirc/\fullcirc/\halfcirc: Absence/Presence/Weaker-version of the property,\\  
                \hspace{0.8em}$\bigstar$: Data quality,
                $\boldsymbol{\dagger}$: Semi-honest intermediary,
                $\boldsymbol{\ddagger}$: TEE,
                ${\blacklozenge} $: Blockchain
            
                \caption{\avecq vs related works. Special symbols denote weaker variants of the property are achieved.        \label{tab:threeparttable}}
            \end{threeparttable}
          
        \end{table}
        
    \smallskip\noindent\textbf{This Work.} 
        We propose {$\mathsf{AVeCQ}$} (Figure~\ref{fig:overall}), 
        a crowdsourcing system that is 
        \emph{the first to satisfy worker anonymity while maintaining worker qualities in a verifiable manner}.
        Table~\ref{tab:threeparttable} highlights the differences between \avecq and existing works in terms of achieved properties (see Section~\ref{sec:rw} for a more in-depth comparison).
        
        First, our system achieves anonymity with unlinkability, i.e., 
        requesters learn \emph{nothing} about participating workers except for their explicit task answers and possibly that their corresponding qualities are above the threshold specified in the task's policy (but, crucially, not the quality itself).
        Second, even though each worker's participation history remains hidden, \avecq supports \emph{verifiable} qualities, meaning that the following two conditions apply.
        To participate in a task, workers must prove they are using their \emph{correctly calculated} quality as \emph{derived} by their entire participation history. 
        Likewise, upon task completion, requesters must prove the correctness of the participants' quality updates according to the answers and the task policy.
        Crucially, the workers in \avecq verify their updated qualities without any information about the final answer except for what can be trivially inferred from the policy. Note that achieving each of this properties on its own is rather straight-forward (e.g., if we do not care about protecting worker identities it is easy to check that the correct quality score is used for each task); the challenge arises when simultaneously trying to achieve both. 
        
        \avecq is additionally secure against other significant threats. 
        Anonymity might allow workers to generate multiple identities arbitrarily (i.e., perform a Sybil attack) and reap extra payments~\cite{wangKDD20}. 
        \avecq avoids Sybil attacks by requiring workers certificates to be issued by a \emph{Registration Authority} (RA) before task participation. 
        Moreover, we counter \emph{free-rider attacks}, i.e., workers cannot submit someone else's response or use another's quality as their own to successfully participate in a task.
        Last, \avecq guarantees fair worker compensation according to task policy, i.e., the requester cannot ``cheat'' to avoid payments.
        
        \smallskip\noindent\textbf{{Overview of Challenges and Techniques.}} 
        At the core of our solution lies a wide range of cryptographic techniques, including zero-knowledge proofs (ZKPs)
        that allow a prover to convince a verifier about the correctness of a computation, without revealing any private information. 
        Below we state briefly the major challenges we encountered in this work and how we overcame them.
        
    \smallskip
    \noindent\underline{\textit{Privately updating qualities, verifiably.}} 
        To hide worker qualities from the task requester we have workers submit their qualities as additively homomorphic commitments. 
        This enables requesters to increment/decrement all workers' qualities without ever accessing any raw underlying quality. This calculation is based on the workers' answers, the final answer, and the task policy.
        Now, all workers know their own answer to the task and the policy, so when they see their updated quality it is trivial to check whether the change is computed correctly or not, \emph{if they also know the final answer}.  
        However, since \avecq does not reveal the final answer to the workers, the question of how to verify that the corresponding updates are honestly computed arises.   
        To this end, the requester is obligated to present individual ZKPs to all workers, regarding the correctness of their quality update.

    \smallskip
    \noindent\underline{\textit{Proof of quality-freshness.}} 
        If a worker's quality decreases after participating in a task she may be incentivised to discard the latest quality update and reuse her earlier quality in future tasks.
        Thus, we face the following problem: ``\emph{How can we ensure that the quality used is always the latest, according to all previous task participations?}''. To address this, we borrow and adapt a technique  used in privacy-preserving cryptocurrencies~\cite{hopwood2016zcash,kosba2016hawk}. 
        The proof that requesters compute for each worker after task completion  must also 
        pertain to the fact that the updated quality commitment of a worker has been appended as a leaf to a Merkle Tree which contains all quality commitments across all tasks. 
        A worker that wishes to participate in a task provides a ZKP that pertains to the fact that the re-randomized quality commitment she provides corresponds to the quality committed in one of the Merkle tree leaves, without revealing to which one. 
        In this manner, a worker that tries to benefit by reusing an earlier quality will break her anonymity due to the way this proof is crafted in $\mathsf{AVeCQ}$---specifically as the same Merkle leaf will need to be used again (see Section~\ref{sec:avecq}).
        Crucially, this is also easy to detect even long after the worker's participation.
        
    \smallskip
    \noindent\underline{\textit{Anonymous task participation.}}
        To anonymously participate in future tasks, the worker cannot utilize the quality commitment in the way it exists in the Merkle tree as that would trivially break anonymity.  
        To surpass this obstacle, a worker can re-randomize her quality commitment and provide the requester with a ZKP that corresponds to its correctly updated quality and is above the task participation threshold.
        However, a new question arises now: ``\emph{how can a requester be certain that the worker has not re-randomized an outdated quality commitment?}'' 
        To prevent such behavior, each worker must submit a cryptographic hash, including the quality commitment used in the ZKP above, concatenated with the unique identifier it provided to the RA during registration. 
        Thus, any worker trying to re-randomize/use a previous quality could be trivially detected.
        
        Nevertheless, if the hash is computed ``naively'' now the RA can launch possible de-anonymization attacks.
        To avoid this, essentially, the worker must not hash any public information (i.e., its latest quality commitment transmitted by the latest task requester) together with its identifier. 
        Instead, the requester, upon concluding its task, will transmit the updated quality of the worker combined with a ``dummy'' commitment, whose randmoness will communicate to the worker in a confidential manner.
        The worker can then decompose the commitment, extract its \emph{true} quality commitment, and continue participating in tasks without compromising its identity.
        
    \smallskip
    \noindent\underline{\textit{Resistance to free-riding.}} 
        Workers might attempt to participate effortlessly by (re)submitting someone else's response or quality. \avecq resists such free-riding attacks by tying each worker's answer to her quality and a unique ``payment wallet'' via a ZKP. Thus, no worker attempting to ``hijack'' someone else's response is able to produce valid proof for participation and get compensated.

    \smallskip
    \noindent\textbf{Implementing $\mathsf{\textbf{AVeCQ}}$.} 
        We implement a prototype of \avecq, whose smart contract component is deployed over Ethereum testnets, i.e., Rinkeby~\cite{rinkeby} and Goerli~\cite{goerli}. 
        To demonstrate the practicality and scalability of $\mathsf{AVeCQ}$, we report extensively on its performance focusing on computational and blockchain overheads, communication bandwidth, and monetary costs. 
        We test \avecq on three popular, real-world-inspired tasks~\cite{krivosheev2020detecting,zheng2017truth,sun2018truth}, i.e., image annotation, average review estimation, and Gallup polls, using the real datasets Duck~\cite{welinder2010multidimensional}, Amazon Review~\cite{he2016ups}, and COVID-19 Survey~\cite{perez2022dataset}.  
        Our  results show, somewhat surprisingly, that \avecq outperforms other state-of-the-art systems, even though it achieves a combination of stronger security properties and/or operates in a stronger security model (Table~\ref{tab:threeparttable}). 
        We benchmark $\mathsf{AVeCQ}$'s performance for representative tasks against the ``best'' systems with available  implementation. 
        For instance, for binary image annotation with $39$ workers our end-to-end (E2E) time is $<8$ minutes, vs. $<2$ hours for only $11$ workers in~\cite{lu2018zebralancer}.
        Note that contrary to ZebraLancer, $\mathsf{AVeCQ}$'s smart contract is solely used for storage purposes (e.g., no on-chain verfication happens---see Section~\ref{sec:avecq}). 
        \avecq also retains its edge in terms of gas consumption.
        E.g., for generating an average review with $128$ workers, a requester in \avecq consumes $\approx 4.3$M gas units, whereas in~\cite{duan2019aggregating} the requester requires $\approx 35$M gas units for just $100$ workers. 
        Last, \avecq consumes $<25\%$ of the gas required in~\cite{lu2020dragoon} (See Section~\ref{sec:exp} for a more detailed comparison).

    \smallskip\noindent\textbf{Our Contributions.}
        In summary, we construct a crowdsourcing system that bridges the gap between anonymity and worker qualities while being able to scale to real-world inspired task instances.
        We highlight the main contributions of our work as follows:
        
        \begin{enumerate}[leftmargin=*, topsep=0.1pt, itemsep=0pt]
            
            \item We design $\mathsf{AVeCQ}$, the first crowdsourcing system that guarantees the anonymity of participating workers across tasks while maintaining a quality system verifiably (see Section~\ref{sec:avecq}).
            Crucially, it does so without compromising functionality, as it can support arbitrary policies and tasks.
            
            \smallskip\item We provide definitions for three critical security and privacy properties of crowdsourcing systems: anonymity, free-rider resistance, and policy verifiability. 
            We prove that \avecq satisfies all three properties under standard assumptions (see Section~\ref{sec:secanal}). Additionally, in Appendix~\ref{subsec:sec_B}, we additionally show \avecq to be secure against other, various, popular attacks. 
            
            \smallskip\item We develop a prototype implementation of \avecq and test its performance thoroughly. 
            In terms of efficiency and scalability, our implementation is comparable, when not better, than other state-of-the-art systems providing less functionality (e.g., support only gold-standard tasks) or operate in a weaker threat model (see Section~\ref{sec:exp}).
            In fact, we provide an in-depth comparison with prior works, both qualitative (in terms of properties) and quantitative (in terms of three specific real-world tasks).
            
        \end{enumerate}
        


\section{Related Work}\label{sec:rw}

    \noindent\textbf{Worker Quality.}  %
        Prior works in crowdsourcing systems adopt different notions of ``quality''. The authors of~\cite{lu2020dragoon,pace21,gao2021trustworker,bhitVLDB22} interpret worker quality in terms of proximity to an ``estimated'' or ``final'' answer. Specifically, Lu {et al.}~\cite{lu2020dragoon} use a set of gold-standard tasks, whose final answer is known a priori to the requester and a posteriori to the workers, to determine the quality of the answers. 
        Despite being commonly used, this approach is rather limiting since it only works when the answer is known. It does not work for other popular crowdsourcing tasks e.g., Gallup polls. 
        The authors of~\cite{pace21,gao2021trustworker} assign scores to workers based on the proximity of their response to the mean of the submitted data.
        Contrary, (as also in~\cite{li2018crowdbc,jo2019bprf,MCS,wang2017melody}) we interpret worker quality as a representation of workers' entire historical task performance. This is not only more realistic but also strictly more general, since \avecq can capture all previous quality notions through different task policies. 
        
    \smallskip
    \noindent\textbf{Privacy.} 
        Another point of contention in the literature revolves around the definition of identity and data privacy---both of which are essential. 
        Anonymity with unlinkability protects the workers identities across tasks entirely, as explained before. 
        Contrary, data privacy limits what the system entities (e.g., blockchain nodes) learn about the workers' data.
        
        Anonymity requires two conditions to be met: \emph{(i)} the identity of a worker remains hidden and \emph{(ii)} the quality itself cannot be used to de-anonymize workers. 
        The authors of~\cite{6888897,7463023} satisfy \emph{(i)} by enabling workers to generate identities freely.
        However, this in turn allows workers to carry out Sybil attacks. 
        Similarly, CrowdBC~\cite{li2018crowdbc} suffers from the same limitation, with two additional drawbacks. Crucially,~CrowdBC stores qualities on-chain, in the plain. This poses a potential and significant risk to de-anonymizing workers, as explained before, rendering~\cite{li2018crowdbc} not anonymous. 
        
        Regarding \emph{data privacy} towards third parties (most commonly referred to as \emph{confidentiality}), it is commonplace to require workers to submit their responses in an encrypted manner as done in~\cite{wang2016incentive,bindschaedler2017plausible,li2018crowdbc,liu2019iiot,lu2018zebralancer,hu2018reputation,duan2019aggregating,pace21,lu2020dragoon}, with two exceptions. 
        First, in~\cite{gao2021trustworker} a deterministic encryption scheme is used meaning the Computing Server (the intermediary collecting worker responses) can access them in the plain, while in~\cite{jo2019bprf} all responses are already communicated in the plain. 
        The authors of~\cite{to2014framework,erlingsson2014rappor,to2015privgeocrowd,wang2017location,wang2018geographic,Luo2018differential,privacy2019ijcai,tao2020differentially} use differential privacy to protect workers' inputs. 
        However, noisy methods affect the correctness of the crowdsourcing process as they dilute the final answer. Thus, they are impractical when including qualities in the crowdsourcing model, since most quality update mechanisms are based on correlations between a worker's answer and the ``final answer'' of a task, and even more so when the set of possible answers is of limited size.
        
    \smallskip
    \noindent\textbf{Verifiable Policy.}
        To achieve policy verifiability, works such as~\cite{pace21,gao2021trustworker,yan2021verifiable,duan2019aggregating,zhu2019zkcrowd,jo2019bprf} either entrust \emph{(i)} a semi-honest intermediary, \emph{(ii)} a Trusted Execution Environment (TEE), e.g., Intel SGX, or \emph{(iii)} blockchain miners to carry out policy-related computations (e.g., calculate the final answer or rewards).  
        However, having to ``blindly'' trust intermediaries is not ideal, secure hardware is susceptible to side-channel attacks~\cite{sgx1,sgx2}, and on-chain computations jeopardise data confidentiality.
        \avecq is free of any such assumptions and is policy-verifiable under only cryptographic assumptions.

    \smallskip
    \noindent\textbf{Other Security Issues.} 
        Two common security threats in crowdsourcing are \emph{Sybil} and \emph{Free-riding} attacks.
        A countermeasure to Sybil attacks is employing a trusted RA that registers workers by issuing a certificate based on the worker's unique identifier (e.g., ID documentation)~\cite{DBLP:conf/esorics/TanasDH15,lu2018zebralancer,liu2019iiot,jo2019bprf,zhu2019zkcrowd}.  
        In fact, the authors of~\cite{li2018crowdbc,duan2019aggregating,gao2021trustworker} do not utilize an RA and fail to prevent such attacks. 
        Alternatively, in~\cite{duan2019aggregating} the authors argue that their \emph{incentive-compatible} mechanism disincentives worker misbehavior ---a strictly stronger assumption.
    
        To safeguard against free-riding, requesters can employ a trusted third party~\cite{7037593}, or couple workers' certificates with their public addresses~\cite{lu2018zebralancer}.
        CrowdBC~\cite{li2018crowdbc}, alternatively, requires workers to deposit funds to a smart contract to be eligible for a task. Workers are incentivized to exert effort, or risk their deposits. Instead, \avecq eliminates free-riding attacks also by relying only on cryptographic assumptions.
        
    \smallskip
    \noindent\textbf{\avecq vs. State-of-the-art.}
        Unlike the works above, our system satisfies all properties in Table~\ref{tab:threeparttable}.
        Recently, Liang et al. proposed bHIT~\cite{bhitVLDB22}, a blockchain-based crowdsourcing system for HITs that, similarly to $\mathsf{AVeCQ}$, does not disclose the responses of the workers to the public blockchain.
        However, bHIT does not provide any notion of anonymity or policy verifiability and operates in a weaker security model since it does not consider colluding workers.
        
        Zebralancer~\cite{lu2018zebralancer} is the closest work to ours in terms of the employed techniques and properties. 
        Both \avecq and Zebralancer utilize zk-SNARKs but Zebralancer only uses them for proving the correctness of the rewards calculation, while in \avecq zk-SNARKs are also used by workers to prove that they are using their latest valid qualities. Moreover, requesters use zk-SNARKs to prove the correctness of the final answer, updates of qualities, and calculation of payments. Similarly, both works employ an RA, and utilize smart contracts for task deployment and participation. 
        However, in contrast to Zebralancer, \avecq additionally supports worker qualities. 
        This extra feature is far from trivial to implement as new security and privacy concerns emerge resulting in a more complex protocol.
        Nevertheless, this does not come at a performance cost, as \avecq is at least as (or even more) efficient than Zebralancer, despite of supporting worker qualities.

%
%

\section{Preliminaries\label{sec:prelim}}
    
    We now present the tools used in $\mathsf{AVeCQ}$. 
    Table~\ref{tab:notation} provides a reference for key notation.
    Let $\mathbb{E}$ be an elliptic curve defined over a large prime field $\mathbb{F}_p$ with $G,H\in\mathbb{E}$ as publicly known generators. 
    We denote by $x\sample A$ the sampling at random of the element $x$ from the domain $A$. 
    We denote by $\lambda$ a security parameter and by $negl(\lambda)$ a function negligible in $\lambda$.
    Last, we denote by $Adv^\mathcal{G}(\adv)$ the advantage that $\adv$ has in winning the $\mathcal{G}$ game.
    
    \noindent\textbf{Pedersen Commitments~\cite{pedersen1991non}.}    
        A commitment scheme binds and hides a value $x$.   
        Specifically, a Pedersen commitment of $x$ with randomness $r$ is in the form of $\textsf{Com}(x,r)=x\cdot G + r\cdot H$.
        Pedersen commitments are \emph{additively homomorphic}, i.e., $\textsf{Com}(x_1,r_1)\circ \textsf{Com}(x_2,r_2)=\textsf{Com}(x_1+x_2,r_1+r_2)$, \emph{computationally binding} (it is not feasible to ``change one's mind'' after committing), and  \emph{perfectly hiding} (they reveal nothing about the committed data).

    \begin{table}[t]
        \footnotesize
        \begin{tabular}{p{0.3\columnwidth}p{0.61\columnwidth}}
            \toprule
            \multicolumn{2}{c}{\textbf{Crowdsourcing Model}} \\
            \midrule
            $\mathcal{A}^{\tau}$    & Set of possible answers for task $\tau$\\
            $a_\phi^\tau$,$P^{\tau}$  & Final answer and task policy for  $\tau$ \\
            $d^\tau$, $\mathcal{DL}^\tau$ & Description and deadlines  of  $\tau$ \\
            $q_i^\tau$ & Quality score of $w_i$ at  $\tau$ \\
            $n_{th}^\tau$ & Threshold number of workers required for  $\tau$ \\
            $n^\tau$ & Total number of worker responses for  $\tau$ \\
            $p_i^\tau$ & $w_i$'s payment for $\tau$ \\
            $pa_i^\tau$ & $w_i$'s address for compensation for $\tau$ \\
            \midrule
            \multicolumn{2}{c}{\textbf{Protocol Notations}} \\
            \midrule
            $cert_i$ & EdDSA signature for party $i$ \\
            $root_{MT}$  & Root of a Merkle Tree $(MT)$ \\
            $path_{leaf}$ &  path for $leaf$ in $MT$ \\
            $r_{k,i}^\tau $ & $w_i$'s randomness for  $\tau$ \\
            $r_{*,i}^\tau $ & $w_i$'s randomness to re-randomize $\textsf{Com}(q_i^{\tau-1},\cdot)$ \\
            $r_{**,i}^\tau $ & Randomness of the dummy commitment for $q_i^\tau$ \\
            $r_{c,i}^\tau $ & Randomness with $\textsf{Com}(0,r_{**,i}^\tau)$ for $q_i^\tau$  \\
            $r_{c-d,i}^\tau $ & Randomness without $\textsf{Com}(0,r_{**,i}^\tau)$ for $q_i^\tau$ \\
            $E(pk_R,a_i^\tau;\cdot)$ & Encryption of $w_i$'s response $a_i^\tau$ for  $\tau$ \\
            $E(pk_R,pa_i^\tau;\cdot)$ & Encryption of $w_i$'s address $pa_i^\tau$ for  $\tau$ \\
            $\textsf{Com}(q_i^{\tau-1},r_{c,i}^{\tau-1}+r_{\star,i})$ & Re-randomized commitment of $q_i$ for  $\tau$ \\
            $E(pk_R,r_{k,i}^\tau;\cdot)$ & Encryption of $w_i$'s index $r_{k,i}^\tau$ for  $\tau$ \\
            $H\left(\textsf{Com}(q_i^{\tau-1},r_{c,i}^{\tau-1}),m_i\right)$ & Quality ``tag'' of $w_i$ for  $\tau$ \\
            \midrule
            \multicolumn{2}{c}{\textbf{zk-SNARKs}} \\
            \midrule
            $\pi_{o_i}^\tau$  & \textsc{ProveQual} proof for   $w_i$'s tuple $o_i^\tau$ for  $\tau$ \\
            $\pi_{a_\phi}^\tau$  & \textsc{AuthCalc} proof for $a_\phi^\tau$ for  $\tau$ \\
            $\pi_{q_i}^\tau$   & \textsc{AuthQual} proof for $w_i$'s updated $q_i^\tau$ for  $\tau$ \\
            $\pi_{a_i}^\tau$   & \textsc{AuthValue} proof for   $w_i$'s response $a_i^\tau$ for  $\tau$ \\
            \bottomrule
        \end{tabular}
    
        \caption{\label{tab:notation}Key Notations}
      
    \end{table}

    \smallskip
    \noindent\textbf{Public-Key Encryption (PKE) Scheme~\cite{koblitz1987elliptic}.}  
        A PKE scheme consists of the following algorithms:
        
        \begin{itemize}[leftmargin=*, topsep=0.1pt, itemsep=0pt]
            
            \item\textsf{KeyGen}$(\lambda)\rightarrow (sk,pk)$. Given the security parameter $\lambda$, \textsf{KeyGen} samples a secret key $sk\sample\{0,1\}^\lambda$, computes the public key $pk=sk\cdot G$. It outputs the key-pair ($sk,pk$).
            
            \item\textsf{Enc}$(pk,x;r)\rightarrow E(pk,x;r)$. To encrypt a value $x$, the algorithm takes input a randomness $r$ and outputs the curve point $(r\cdot G,P_x+\cdot r(sk\cdot G))$. Here, $P_x$ is a publicly-known mapping of a value $x$ to a curve point in $\mathbb{E}$. 
            
            \item\textsf{Dec}$(E(pk,x;r),sk)\rightarrow x$. To decrypt $x$ from $E(pk,x;r)$, the algorithms computes $x:=P_x+\cdot r(sk\cdot G) - r\cdot sk\cdot G$.
        \end{itemize}
    
    \smallskip
    \noindent\textbf{Hash Function~\cite{damgaard1989design}.} 
        A cryptographic hash function $H:\{0,1\}^*\rightarrow \{0,1\}^\lambda$ is collision-resistant if the probability of two distinct inputs mapping to the same output is negligible: $\Pr[H(x)=H(y) \mid x\not=y] \leq negl(\lambda)$.
        Additionally, it is pre-image resistant if the probability of inverting it is negligible. We denote the indistinguishability games for these properties as  $H-CR$ and $H-PR$ respectively.

    \smallskip
    \noindent\textbf{Digital Signatures~\cite{ches-2011-24091}.} 
        A digital signature scheme allows verification of the authenticity of a certificate. 
        EdDSA is a  Schnorr-based signature scheme defined over $\mathbb{E}$. 
        In EdDSA, given $G$, one derives its public key $pk$ by sampling $sk\sample \mathbb{F}_p$. 
        A party $i$ signs the value $H(m_i)$ for a secret message $m_i$ denoted as its signature $cert_i=(R_i,S_i)$. 
        Here, $R_i=r\cdot G$ s.t. $r\sample\mathbb{F}_p$, and $S_i=r+H(m_i)\cdot sk$. 
        A verifier accepts the signature iff $S_i\cdot G = R_i + H(m_i)\cdot pk$ holds. 
        
    \smallskip
    \noindent\textbf{zk-SNARKs~\cite{sasson2014zerocash}.} 
        A zero-knowledge succinct non-interactive argument of knowledge (zk-SNARK) allows a prover to convince a verifier about the correctness of a computation on private input, through a protocol. 
        The verifier-available information is referred to as \emph{statement} $\Vec{x}$ and the private input of the prover as \emph{witness} $\Vec{\omega}$. 
        The protocol execution takes place in a non-interactive manner, with succinct communication.
        A zk-SNARK consists of:
        \emph{(i)} a \emph{Setup} algorithm which outputs the public parameters $PP$ for a NP-complete language $\mathcal{L}_\mathcal{S}=\{\Vec{x}~|~\exists~\Vec{\omega} \mbox{~s.t.~}\mathcal{S}(\Vec{x},\Vec{\omega})=1\}$, where $\mathcal{S}:\mathbb{F}^n\times \mathbb{F}^h\rightarrow\mathbb{F}^l$ is the \emph{arithmetic circuit satisfiability problem} of an $\mathbb{F}$-arithmetic circuit; 
        \emph{(ii)} A \emph{Prover} algorithm that outputs a constant size proof $\pi$, attesting to the correctness of $\Vec{x}\in\mathcal{L}_S$ with witness $\Vec{\omega}$; 
        and \emph{(iii)} A \emph{Verifier} algorithm which efficiently checks the proof. 
        Informally, a zk-SNARK satisfies the following properties: 
        \begin{itemize}[leftmargin=*, topsep=0.1pt, itemsep=0pt]
            \item \textit{Completeness.} If $\exists\Vec{\omega}: \mathcal{L}_\mathcal{S}(\Vec{x},\Vec{\omega})=1$, honest Provers can always convince  Verifiers.
          
            \item \textit{Soundness.} If $\exists\Vec{\omega}^\prime: \mathcal{L}_\mathcal{S}(\Vec{x},\Vec{\omega})=0$, a dishonest Prover has negligible probability in convincing the Verifier.
          
            \item \textit{Zero-knowledge.} If $\exists \Vec{\omega}: \mathcal{L}_\mathcal{S}(\Vec{x},\Vec{\omega})=1$, the Verifier does not learn any information about $\Vec{\omega}$ (besides its existence).
        \end{itemize}

    \smallskip
    \noindent\textbf{Merkle Tree (MT)~\cite{merkle1987digital}.} 
    A Merkle tree ($MT$) is a complete binary tree where each parent node is a hash of its children. 
    This structure allows for \emph{membership} proofs, attesting to the existence of a specific leaf via a publicly known root ($root_{MT}$) and a path ($path_{leaf}$).

    \smallskip
    \noindent\textbf{Blockchain \& Smart Contracts.} 
        A blockchain is a ledger distributed across peers and made secure through cryptography and incentives. 
        Peers agree upon storing information in the form of blocks through consensus algorithms.
        Blockchain technology has transcended its use in cryptocurrency applications, especially with the introduction of smart contracts with Ethereum~\cite{wood2014ethereum}. 
        A smart contract is a computer program that can be run in an on-chain manner.
        Performing any smart-contract computation  over Ethereum requires \emph{gas}. 
        The amount of gas charged depends on the type of computation. More computationally extensive operations require higher gas to be executed on-chain.
        The total charge for the computation is referred to as \textit{gas cost}. 
        Any computation that alters the {state} of the contract, i.e., alters any contract method or variables, consumes gas.
        Contrary, reading data from the contract is free. 
        To submit state-altering transactions users specify a \emph{gas price} that they are willing to pay per \emph{gas unit}.

    \smallskip
    \noindent\textbf{EIP-1559~\cite{buterin2019eip}.} 
        In Ethereum, each transaction creator pays a dynamic base fee $b$ and a priority fee $\delta$ (in gas). Each block's miner receives $\delta$ while the base fee is ``burned'' (i.e., removed from the supply, forever). $\delta$ affects the verification time, as higher $\delta$ results in faster transaction verification time.

    \smallskip
    \noindent\textbf{Crowdsourcing Quality-Update Policies.} 
        Various techniques have been proposed for updating workers' qualities after task participation~\cite{10.1145/1401890.1401965, karger2011iterative,8024034,7065282,buchegger2003robust}. 
        \avecq integrates the widely adopted in academia~\cite{8024034,7065282} and industry (Amazon MTurk~\cite{mturk}) \emph{Beta} distribution.
        Formally, for each $w_i$ and task $\tau$, we maintain integers $\alpha_i^\tau$ and $\beta_i^\tau$, initialised to $1$. 
        The quality score for $\tau$ is then given by the Beta distribution with mean $\frac{\alpha_i^\tau}{\alpha_i^\tau+\beta_i^\tau}$.
        The update rule considering a worker's response $a_i^\tau$ and the ``final answer'' $a_\phi^\tau$ is: 
        \begin{itemize}[leftmargin=*, topsep=0.1pt, itemsep=0pt]
            \item if $a_i^\tau = a_\phi^\tau: \alpha_i^{\tau+1} = \alpha_i^{\tau} + 1; ~\beta_i^{\tau+1} = \beta_i^\tau$
    
            \item if $a_i^\tau \neq a_\phi^\tau: \beta^{\tau+1} = \beta^{\tau} + 1; ~\alpha_i^{\tau+1} = \alpha_i^{\tau}$
    
        \end{itemize}
    
        \noindent \avecq uses this mechanism to handle quality updates and specifically computes increments/decrements as Pedersen commitments of $1$ and $0$, accordingly. That way a worker can utilize the commitments' additive homomorphic property to compute its new quality.

%
%

\section{Problem Formulation}\label{sec:model}
    In this section, we present the problem formulation in three aspects: system model, threat model, and problem statement.

    \smallskip
    \noindent\textbf{System Model.}
        It includes three types of entities, namely a \emph{Registration Authority} (RA), \emph{a set of Requesters}, and \emph{a set of Workers}. 
        The RA is in charge of registering workers in the system by providing them with a participation certificate, upon receiving a unique identifier.
        Requesters can \emph{(i)} create and publish tasks, and \emph{(ii)} collect worker-generated responses.
        Workers observe published tasks and upon wishing to participate in a task they provide the respective requester with their individual responses.
        At a high level, to create a task, a requester needs to disclose its description alongside an answer-calculation mechanism, a quality-update rule, and a payment scheme. 
        Remember that each worker has a \emph{quality} that reflects their trustworthiness based on their previous answers. 
        Specifically, we denote all registered workers by the set $\mathcal{W} = \{w_{1},\ldots,w_{n}\}$, and their associated qualities by the set $\mathcal{Q} = \{q_{1},\ldots,q_{n}\}$.

    \smallskip
    \noindent\textbf{Threat Model.}
        We make no assumptions as to the behavior of requesters or workers. 
        Malicious requesters (R) can try to infer information about participating workers (not trivially leaked by their answers), calculate the final answer and the updated qualities arbitrarily, and avoid payments.
        On the other hand, malicious workers (W) can try to generate multiple identities arbitrarily and respond by utilizing outdated quality scores or even someone else's response. 
        Last, the RA is considered to be semi-honest, but may try to track a worker across tasks.
        Based on the above, Table~\ref{tab:attacks} presents possible attacks to systems operating in our threat model.

    \begin{table}[t]
    \footnotesize
        \centering
        \begin{tabular}{p{0.1\columnwidth}p{0.75\columnwidth}}
            \toprule
            \textbf{Entity} &  \hspace{7em} \textbf{Adversarial Behavior} \\
            \midrule
            \rotatebox[origin=c]{45}{{\multirow{3}{*}{\raisebox{2pt}{\hspace{-1.6em} Requester}}}}  & -- ``de-anonymize'' workers from multi-task participation \\[-0.7em] 
            &  -- update qualities and calculate payments arbitrarily\\
            &  -- avoid paying or updating qualities\\[2pt]
            \midrule
            \rotatebox[origin=c]{45}{{\multirow{2}{*}{\raisebox{2pt}{\hspace{-1em} Worker}}}} & -- participate in a single task multiple times\\[-0.5em]
            &  -- use other worker's answer/quality and get rewards \\[0pt]
            \midrule
            \rotatebox[origin=c]{45}{{\multirow{2}{*}{\raisebox{8pt}{\hspace{0.5em} RA}}}}    &        -- track a worker across tasks \\[-4pt]
            \bottomrule
        \end{tabular}

    \caption{Possible attacks based on \avecq's threat model}

    \label{tab:attacks}
    \end{table}
    
    \medskip
    \noindent\textbf{Problem Statement.}
        Considering the above system and threat model, the problem we study in this paper is the following: How to design an efficient and scalable system that carries out arbitrary crowdsourcing tasks and supports worker qualities, while safeguarding against the mentioned attacks.


    \begin{figure*}[t]

        \fbox{
            \begin{minipage}{2.03\columnwidth}
            	\centering
            	\begin{overpic}[width=1.02\columnwidth]{PYTHIA-new.png}
            	\linethickness{0.6pt}
            	
                %
                %
                
            	\put(65,205){{\textbf{Requester} $R$} }
            	\put(225,205){{\textbf{Blockchain}}}
            	\put(380,205){{\textbf{Worker} $w_i$ } }

                \qbezier[200](5,202)(300,202)(485,202)
            
                \qbezier[100](5,202)(5,100)(5,7)
                \qbezier[100](485,202)(485,100)(485,7)
                \put(-5,150) {\footnotesize \rotatebox{-270}{\textbf{Task Creation}}}
            
                \put(25,190) {\footnotesize Deploys \texttt{CSTask}}
            	\put(125,190){\color{black}\vector(1,0){71}}
            	\put(22, 186){\color{black}\line(0,1){12}}
            	\put(7,188) {{\Circled[inner color=red, outer color=red]{1}}} 
            		\put(205,184)
                {\includegraphics[width=.03\textwidth]{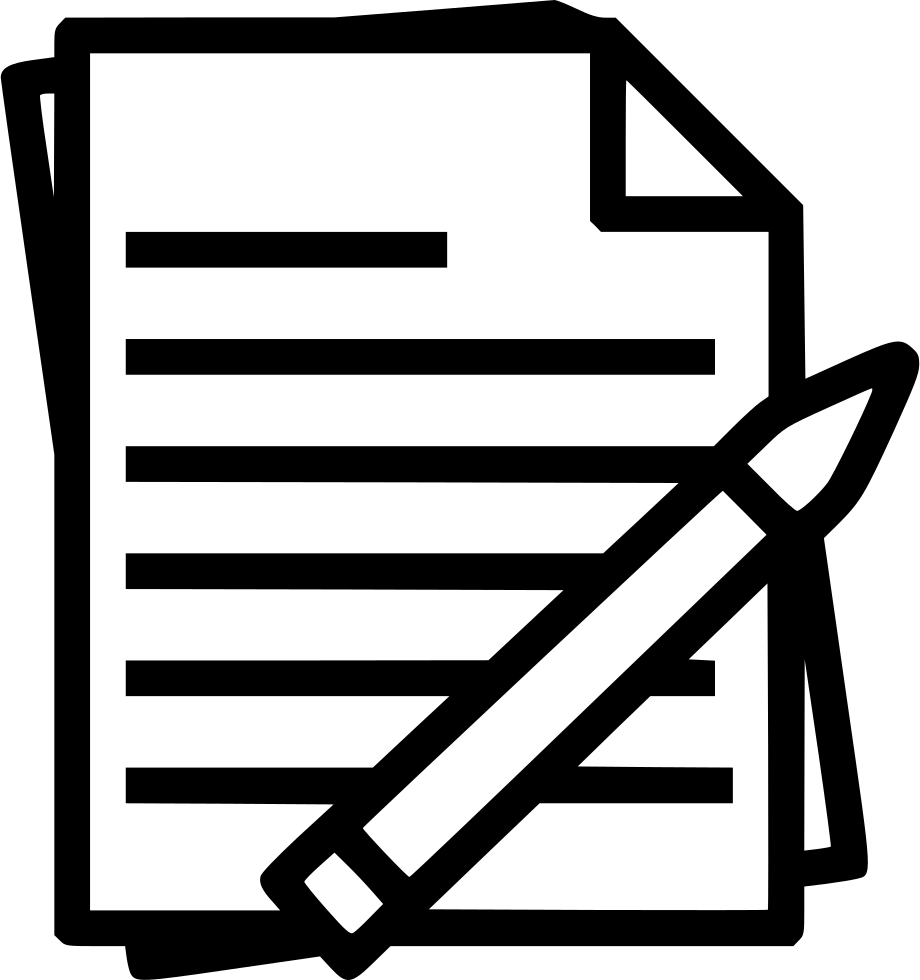}}

                \put(25,170) {\footnotesize Creates task with \textsc{CreateTask}\big($\tau$ =}
                \put(25,160) {\footnotesize $\langle d^\tau, \mathcal{A}^\tau, \Gamma^\tau, \mathcal{DL}^\tau, n_{th}^\tau, pk_R, P^\tau\rangle$\big)}
                \put(22,158){\color{black}\line(0,1){20}}
            	\put(170,172){\color{black}\vector(1,0){25}}
            	\put(7,165){{\Circled[inner color=red, outer color=red]{2}}}
            	\put(205,165)
                {\includegraphics[width=.03\textwidth]{write.png}}
            

                {\qbezier[100](5,152)(100,152)(197,152)}

                
                \put(340,180) {\footnotesize Pulls the task policy and $pk_R$ }
                \put(340, 170) {\footnotesize Produces $\pi_{o_i}^\tau$ with \textsc{ProveQual}}
                \put(335, 166){\color{black}\line(0,1){20}}
                \put(320, 175){{\Circled[inner color=blue, outer color=blue]{3}}}
                \put(300,168){\color{black}\vector(1,0){30}}
                	\put(278,160)
                {\includegraphics[width=.035\textwidth]{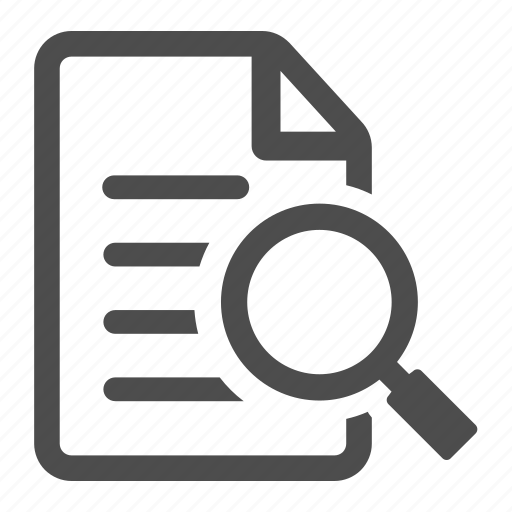}}
            
                \put(340, 150) {\footnotesize Submits a response with } 
                
                \put(340, 140) {\footnotesize \textsc{SubmitResponse}\Big($o_i^\tau=\langle E(pk_R,a_i^\tau; \cdot),$
                
                } 
                \put(340, 130) {\footnotesize  $ E(pk_R,pa_i^\tau ; \cdot), \textsf{Com}(q_i^{\tau-1},r_{c,i}^{\tau-1}+r_{\star,i}^\tau),$
                }
                \put(340, 120) {\footnotesize $E(pk_R,r_{k,i}^{\tau};\cdot),H\big(\textsf{Com}(q_i^{\tau-1},r_{c,i}^{\tau-1}),m_i\big),$ }
                    \put(340, 110) {\footnotesize $ \pi_{o_i}^\tau\rangle\Big)$ }
                \put(335, 108) {\color{black}\line(0,1){50}}
                \put(320, 140){{\Circled[inner color=red, outer color=red]{4}}}
                \put(330, 125){\color{black}\vector(-1,0){30}}  
                	\put(278,118)
                {\includegraphics[width=.03\textwidth]{write.png}}
                
                
                \qbezier[100](305,100)(395,100)(485,100)
                \put(486,191){\footnotesize \rotatebox{-90}{\textbf{Response Submission}}}
                
                \put(25,140) {\footnotesize Verifies $\pi_{o_i}^\tau~\forall w_i$ }
                \put(25,130) {\footnotesize Decrypts $E(pk_R,a_i^\tau; \cdot),\forall w_i$} 
                \put(25,120) {\footnotesize Calculates $a_\phi^\tau$ using \textsc{AnsCalc}}
                \put(195,125) {\color{black}\vector(-1,0){40}}
                \put(22,117) {\color{black}\line(0,1){30}}
                \put(7,110) {{\Circled[inner color=blue, outer color=blue]{5}}}
                    	\put(205,120)
                {\includegraphics[width=.035\textwidth]{read.png}}

                \put(25,110){\footnotesize Correctness proof $\pi_{a_\phi}^\tau$ with \textsc{AuthCalc}}
                \put(25,100) {\footnotesize Generate $q_i^\tau\mbox{~and~}\pi_{q_i}^\tau, \forall w_i$ using \textsc{AuthQual}}
                \put(25,90) {\footnotesize Comparison proof $\pi_{a_i}^\tau$ with \textsc{AuthValue}}
                \put(25,80) {\footnotesize Appends $\textsf{Com}(q_i^\tau,r_{c,i}^\tau)$ to $MT$
                }  
                \put(22,77) {\color{black}\line(0,1){40}}

                \put(25, 65) {\footnotesize Uploads $E(pk_r,a_\phi^\tau;\cdot)$ and $\pi_{a_\phi}^\tau$ } 
                \put(25, 55) {\footnotesize Submits qualities individually with} 
                \put(25, 45) {\footnotesize \textsc{SubmitQuality}$\big(H(r_{k,i}^{\tau}), r_{R,i}^\tau+r_{k,i}^\tau,r_{\star\star,i}^\tau+r_{k,i}^\tau,$}
                \put(25, 35) {\footnotesize $ \textsf{Com}(q_i^{\tau},r_{c-d,i}^{\tau}),pos_{q_i}^{\tau}+r_{k,i}^\tau,\pi_{q_i}^\tau, \pi_{a_i}^\tau\big),\forall w_i$}
                \put(25,25) {\footnotesize Submits payments individually with} 
                \put(25,15) {\footnotesize\textsc{WorkerPayment}$(Paym_i^\tau=\langle p_i^\tau,pa_i^\tau\rangle)$}
                \put(22,13) {\color{black}\line(0,1){60}}
                \put(7, 37) {{\Circled[inner color=red, outer color=red]{6}}}
            	\put(160,58){\color{black}\vector(1,0){35}}
            	  	\put(205,50)
                {\includegraphics[width=.03\textwidth]{write.png}}
            
            
                \qbezier[200](5,7)(300,7)(485,7)
                \put(-5,36){\rotatebox{-270}{\footnotesize\textbf{Response Processing}}}
                
                \put(230, 42) {\footnotesize \texttt{CSTask} pays}
                \put(230, 32) {\footnotesize$p_i^\tau$ to $w_i, \forall w_i$}
                \put(210, 35) {{\Circled[inner color=red, outer color=red]{7}}}
                \put(225, 29){\color{black}\line(0,1){22}}
            
                \put(330, 60) {\footnotesize Verifies proofs $\pi_{q_i}$ and $\pi_{a_i}^\tau$}
                \put(330,50) {\footnotesize If $q_i$ updated correctly:}
                \put(330, 40) {\footnotesize $~\implies$ Adopts $q_i$ as new quality}
                \put(330, 30) {\footnotesize Else:}
                \put(330, 20) {\footnotesize $~\implies$ Approaches RA with $q_i,\pi_{q_i}^\tau$ and $\pi_{a_i}^\tau$ }
            	\put(325, 15){\color{black}\line(0,1){52}}
            	\put(300, 22){\color{black}\vector(1,0){20}}
                \put(310, 33){{\Circled[inner color=blue, outer color=blue]{8}}}
                \put(470,35) {\rotatebox{-90}{\footnotesize\color{magenta}{Protest}}}
                \textcolor{magenta}{\qbezier[55](329,37)(398,37)(468,37)}
                \textcolor{magenta}{\qbezier[15](326,36)(326,23)(326,10)}
                \textcolor{magenta}{\qbezier[15](463,36)(463,23)(463,10)}
                \textcolor{magenta}{\qbezier[55](324,10)(385,10)(458,10)}
                \put(275,15) {\includegraphics[width=.035\textwidth]{read.png}}
            
                \put(479,90){\footnotesize\rotatebox{-90}{\textbf{Quality Verification}}}
            
                \put(340,-3) {\includegraphics[width=.02\textwidth]{read.png}}
                \put(353, -1) {\footnotesize Read}
                \put(290,-2) {\includegraphics[width=.018\textwidth]{write.png}}
                \put(303, -1) {\footnotesize Write}
                \put(390,1) {{\Circled[outer color=red]{}}}
                \put(397, -1) {\footnotesize On-chain}
                \put(440,1) {{\Circled[outer color=blue]{}}}
                \put(447, -1) {\footnotesize Off-chain}
    	
    	       \end{overpic}
    	   \end{minipage}
    	   }

    	\caption{$\mathsf{AVecQ}$: Task-specific stage analysis. \
        To initiate a task, a requester $R$ deploys a smart contract.
        A worker may submit a response including her (encrypted) answer, latest quality (commitment), and attest to the validity of the quality and conformity to the task policy (via a ZKP). 
        $R$ then verifies all submitted proofs off-chain, computes the final answer $a_\phi^\tau$, quality updates (commitments), payments, and ZKPs, and uploads all but $a_\phi^\tau$ on-chain.
        Last, workers get rewards and adopt their new qualities.
        }
    
    	\label{fig:avecq-Pro}
    \end{figure*}

%
%

\section{\avecq\label{sec:avecq}}

    This section presents our crowdsourcing system.
    Our construction utilizes cryptographic components (i.e. digital signature, public-key encryption, commitment scheme, zero-knowledge proof protocol, Merkle tree, and hash function) and a smart contract.
    Particularly, to create a task $\tau$, a requester can specify a set of attributes depending on the expressivity of the task. 
    These include:
    \emph{(i)} the task description $d^\tau$, 
    \emph{(ii)} the set of available answer choices $\mathcal{A}^\tau = \{ \textrm{a}_{1}^\tau,\ldots, \textrm{a}_{c}^\tau \}$, 
    \emph{(iii)} the maximum budget $\Gamma^\tau$ the requester is willing to allocate to the workers,
    \emph{(iv)} a set of deadlines $\mathcal{DL}^\tau$ (signifying until when workers may submit their responses and until when task processing must be completed by the requester), 
    \emph{(v)} a threshold number of workers $n_{th}^\tau$ (denoting the minimum participation of workers that is needed to calculate the final answer), 
    \emph{(vi)} the requester's public key $pk_R$ (which the workers will use to encrypt their sensitive data), 
    and \emph{(vii)} the task policy $P^\tau$ (including task-related information e.g., quality-threshold participation requirements, the final answer calculation mechanism, the quality-update rule, and the payment scheme).
    We denote the set of responses workers submit to task $\tau$ as $\mathcal{O}^\tau_{n^\tau} = \{o_{1}^{\tau}, \ldots, o_{n^\tau}^{\tau} \}$, where $n^\tau$ denotes the the total workers who provided responses to $\tau$.
    A response $o_i$ of worker $w_i$ includes, among others, its answer to the task ($a_i^\tau$) and its quality from the previous task ($q_i^{\tau-1}$)\footnote{We acknowledge that the ``previous'' task for two workers $w_i$  and $w_j$ might differ, however we use $q_i^{\tau-1}$ and $q_j^{\tau-1}$ to denote their latest qualities.}.  
    Therefore, we denote the set of answers to the task $\tau$ as $A_{n^\tau}^\tau=\{a_1^\tau,\ldots,a_{n^\tau}^\tau\}$  and the set of corresponding qualities as $Q_{n^{\tau}}^{\tau-1}=\{q_{1}^{\tau-1},\ldots,q_{n^\tau}^{\tau-1}\}$.

    During a task, the requester collects $\mathcal{O}^\tau_{n^\tau}$ and extracts $A_{n^\tau}^\tau$. 
    Then, (if needed) it calculates the final answer and for all participants their updated qualities and payments. For this, the requester uses the following algorithms:
    \begin{enumerate}[leftmargin=*, topsep=0.1pt, itemsep=0pt]
        \item \textsc{AnsCalc}$(A_{n^\tau}^\tau,Q_{n^\tau}^{\tau-1},P^\tau)\rightarrow a_\phi^\tau$: On input the set of answers from the participating workers $A_{n^\tau}^\tau$, the qualities $Q_{n^\tau}^{\tau-1}$ and the participating policy $P^\tau$, it outputs the final answer $a_\phi^\tau$. 
        
        \item \textsc{QualCalc}$(Q_{n^\tau}^{\tau-1},a_\phi^\tau,P^\tau)\rightarrow Q_{n^\tau}^\tau$: On input the set of the quality scores from the participants $Q_{n^\tau}^{\tau-1}$, the final answer $a_\phi^\tau$, and the participation policy $P^\tau$, outputs the set of the updated quality scores for every participating worker $Q_{n^\tau}^\tau=\{q_1^\tau,\ldots,q_{n^\tau}^\tau\}$.
        
        \item \textsc{PaymCalc}$( A_{n^\tau}^\tau,Q_{n^\tau}^{\tau-1},a_\phi^\tau,P^\tau)\rightarrow Paym^\tau$: On input the set of the answers $A_{n^\tau}^\tau$ and qualities $Q_{n^\tau}^{\tau-1}$, the final answer $a_\phi^\tau$, and the participation policy $P^\tau$, outputs the set of payments for every worker that participated $Paym^\tau=\{p_1^\tau,\ldots,p_n^\tau\}$.
    \end{enumerate}

    \subsection{Protocol}
        Next we describe our construction and explain the design rationale.
        Particularly, our protocol comprises of two stages: a preprocessing-setup and a task-specific one. 
        To assist the reader, we provide the following task as a representative example and follow its execution through the rest of this subsection. 
        
        \smallskip\noindent\textbf{Running example ($\tau_{\texttt{ex}}$).}  
            A requester deploys on $31/1/2023$ at $23:00$ the following image annotation task: 
         
            \begin{align*}
                \mathcal{AT}=\{\text{``Does this image contain a duck?''},
                (\text{Yes},\text{No}),\\
                200ETH,
                (31/1/2023/23:30,1/2/2023/23:59),
                1001,\\
                \texttt{0x7584e47e7a7e09a6be64dc3aeaf0b64364234d9c},\\
                (\text{Quality}>75\%,\text{Majority},\text{Beta distribution},\\
                \text{Correct:$0.0001$ETH,Not correct:$0.00005$ETH})\}
            \end{align*}

        \noindent\underline{Remark:} This is the most expressive task which can be executed through $\mathsf{AVeCQ}$, in terms of supported features. 
        For tasks requiring less functionality (e.g. reward each worker horizontally or do not require/need qualities) the related elements/steps can be omitted.   
        
        \smallskip
        \noindent\textbf{Preprocessing-Setup stage.}
            Ahead of time, the RA generates the public parameters $PP$ by running the zk-SNARKs' setup and encryption key-generation algorithms. 
            When a worker $w_i$ wishes to participate in the crowdsourcing system, it presents the RA with unique identification data $m_i$. The RA, in turn, generates a participation certificate $cert_i$ (i.e., an EdDSA signature on $m_i$, which in the U.S.A. could be the Social Security Number of $w_i$.).
            Additionally, it initializes the quality of $w_i$, generates the commitment $\textsf{Com}(q_i^{(0)},r_{c,i}^{(0)})$ where $r^{(0)}_{c,i} \sample\mathbb{Z}_p$, and appends this as a leaf to a Merkle Tree $MT$. 
            Last, the RA provides $w_i$ with ($cert_i$, $q_i^{(0)}$, $r_{c,i}^{(0)}$). 
            Notably, this is a \emph{dynamic} stage, as new workers can register arbitrarily and independently of other operations. 
            In $\tau_{\texttt{ex}}$, the RA sets up the SNARKs for calculating the most popular answer $a^\tau_\phi$, the Beta distribution, and for the $a^\tau_i\overset{?}{=}a^\tau_\phi$ check.
            We formally present all four zk-SNARKs' construction \avecq utilizes in Section~\ref{subsec:zk-snarks} and further elaborate them in Appendix~\ref{sec:app-zk-snarks}. 

        \smallskip
        \noindent\textbf{Task-specific stage.}
            To carry out a specific task, the requester and participating workers engage in a protocol having \emph{four} phases: \emph{Task Creation, Response Submission, Response Processing, and Quality Verification}. 
            Broadly, a requester creates a crowdsourcing task by deploying a smart contract on a blockchain to elicit responses from interested workers. 
            Upon collecting all  responses, the requester invokes \textsc{AnsCalc} to extract the final answer. Additionally, the requester invokes \textsc{QualCalc} to calculate the updated worker qualities and \textsc{PaymCalc} to calculate individual compensations. 
            All participants have known public addresses and can connect to the blockchain network. 
            Figure~\ref{fig:avecq-Pro} depicts all task-specific phases of $\mathsf{AVeCQ}$'s protocol.
            We present all the phases in detail, below.
        
    \subsubsection{Task Creation}
        This phase includes solely on-chain computations. First, a requester deploys a smart contract (\texttt{CSTask})  and deposits $\Gamma^\tau$ funds into it. 
        Next, to publish a new task $\tau$ the requester (suppose Alice) uses the \textsc{CreateTask} method of \texttt{CSTask} and uploads on-chain the following transaction:
        $tx_{\textsc{CreateTask}(\tau)}=\langle d^\tau, \mathcal{A}^\tau,\Gamma^\tau,\mathcal{DL}^\tau,n_{th}^\tau, pk_R, P^\tau \rangle = \mathcal{AT}$, in $\tau_\texttt{ex}$. 
        The requester then awaits for the next phase to conclude.

    \subsubsection{Response Submission}
        An interested $w_i$ can invoke the \textsc{SubmitResponse} method of \texttt{CSTask} to submit its response.
        Naively, $w_i$ just needs to provide the requester with its answer, its quality, a proof about holding a valid quality, and a public address for compensation. 
        However, \textsc{SubmitResponse}-related data are uploaded to a blockchain. 
        Thus, if we allow $w_i$ to send this data in the plain some of our security properties will be trivially violated\footnote{See Section~\ref{sec:secanal} and Appendix~\ref{sec:proofs} for a more elaborate analysis.}.  
        
        To ensure no leakage of sensitive data, $w_i$ provides a response $o_i^\tau$ containing,
        \emph{(i)} an encryption of its answer $a_i^\tau$,
        \emph{(ii)} an encryption of its public address $pa_i^\tau$, 
        \emph{(iii)} a re-randomized commitment regarding its latest quality $q_i^{\tau-1}$, 
        \emph{(iv)} an encryption of a random value $r_{k,i}^\tau$, 
        \emph{(v)} a hash of the commitment in $MT$ with $m_i$, 
        and \emph{(vi)} a corresponding proof of correctness for all these values. 
        This phase includes off-chain and on-chain computations as $w_i$ generates commitments, encryptions, hashes, and proofs locally; and later on invokes the \textsc{SubmitResponse} method to upload $o_i^\tau$ on \texttt{CSTask}.

        \smallskip
        \noindent\textbf{Off-chain.} 
            First, $w_i$ calculates the ciphertexts for its answer, address, and a random value $r_{k,i}^\tau\sample \mathbb{Z}_p$: $E(pk_R,a_i^\tau;\cdot)$, $E(pk_R,pa_i^\tau;\cdot)$, $E(pk_R,r_{k,i}^\tau; \cdot)$\footnote{Crucially, $r^\tau_{k,i}$ will be used to hide the leaf-position of the worker's newly updated quality commitment.}.  
            Additionally, $w_i$ computes the commitment $\textsf{Com}(q_i^{\tau-1},$ $r_{c,i}^{\tau-1}+r_{\star,i}^\tau)$\footnote{$\textsf{Com}(q_i^{\tau-1},r_{c,i}^{\tau-1})$ is a leaf in $MT$.} and the hash $H\left(\textsf{Com}(q_i^{\tau-1},r_{c,i}^{\tau-1}),m_i\right)$\footnote{We utilize this hash as a \emph{tag} that ensures that $w_i$ cannot re-submit a quality without being detected.} where $r_{\star,i}^\tau\sample\mathbb{Z}_p$. 
            Now, using the proof generating algorithm of our \textsc{ProveQual} zk-SNARK, a worker $w_i$ produces a proof $\pi_{o_i}^\tau$.
            Specifically, $\pi_{o_i}^\tau$ attests to 
            \emph{(i)} $w_i$ having registered, 
            \emph{(ii)} the existence of a leaf in $MT$ that hides $q_i^{\tau-1}$,  
            \emph{(iii)} $q_i^{\tau-1}$ conforms to $P^\tau$, 
            and \emph{(iv)} $(m_i,q_i^{\tau-1})$ are included in the calculation of  $H\big(\textsf{Com}(q_i^{\tau-1},r_{c,i}^{\tau-1}),m_i\big)$.

        \smallskip
        \noindent\textbf{On-chain.} 
            A worker $w_i$ can use the \textsc{SubmitResponse} method of \texttt{CSTask} to submit a response $o_i^\tau$ in the form of the following transaction:
            $tx_{SubmitResponse}(o_i^\tau) = \big<E(pk_R,a_i^\tau; \cdot), E(pk_R,pa_i^\tau; \cdot),$ $\textsf{Com}(q_i^{\tau-1},r_{c,i}^{\tau-1} \text{$+$} r_{\star,i}^\tau), E(pk_R,r_{k,i}^\tau;\cdot),
            H\left(\textsf{Com}(q_i^{\tau-1},r_{c,i}^{\tau-1}),m_i\right), \pi_{o_i^\tau} \big>.$

    \subsubsection{Response Processing}
        During this phase, the requester computes $a_\phi^\tau$, computes and communicates (via the \textsc{SubmitQuality} method) the updated worker qualities, and initiates payments (via \textsc{WorkerPayment}).
        Notably, workers need to be certain that the requester updated qualities, and corresponding payments, based on $P^\tau$.
        Thus, requesters provide individual zk-SNARK proofs for correctly updating qualities. 
        Overall, the requester performs the following off-chain and on-chain computations.  
    
        \smallskip
        \noindent\textbf{Off-chain.} 
            After the response submission deadline (specified in $\mathcal{DL}^\tau$) has passed and $n^\tau \geq n_{th}^\tau$ workers have submitted responses, the requester uses the verification algorithm of \textsc{ProveQual} to verify all proofs $\pi_{o_i^\tau}$, individually.  
            Then, it calculates $a_{\phi}^\tau \leftarrow$  \textsc{AnsCalc}($Q_{n^\tau}^{\tau-1},A_{n^\tau}^\tau$,$P^\tau$), all updated qualities using \textsc{QualCalc}$(Q_{n^\tau}^{\tau-1},$ $a_\phi^\tau,P^\tau)$, and payments using \textsc{PaymCalc}$(Q_{n^\tau}^{\tau-1},A_{n^\tau}^\tau,a_\phi^\tau,P^\tau)$.
            Recall that quality updates must be verifiable, even when only the requester knows $a_{\phi}^\tau$. 
            To achieve this we follow the next three steps.
            
            First, the requester generates a proof of the final answer calculation, using the workers' on-chain responses and the task policy. Specifically, it uses the zk-SNARK \textsc{AuthCalc} to do so. 
            \textsc{AuthCalc} decrypts all encrypted answers and using $P^\tau$ calculates $a_\phi^{\tau}$.
            Second, the requester provides an individual ``proof of correctness'' for each worker $w_i$ regarding the correlation between $a_{\phi}^\tau$, $a_i^\tau$, and $P^\tau$, using \textsc{AuthValue}. 
            This proof includes a single decryption and comparison.
            Third, the requester uses $o_i^\tau$, $a_{\phi}^\tau$, and $P^\tau$ to generate an individual ``proof of correct quality update'' for each worker via \textsc{AuthQual}.
            This includes a decryption and comparison as before, and additionally the verification of the new quality, based on $P^\tau$. 
    
        \smallskip
        \noindent\underline{Quality Updates}.
            Recall that workers submit their qualities in the form of commitments. 
            To update the quality of $w_i$, the requester computes: 
            $\textsf{Com}(q_i^{\tau-1},r_{c,i}^{\tau-1}+r_{\star,i}^\tau) \circ \textsf{Com}(\mu, r_{R,i}^\tau)$, $r_{R,i}^\tau\sample \mathbb{Z}_p$, where $\mu\in\mathbb{Z}$ is the difference between the old and new quality of $w_i$ ($q_i^\tau=q_i^{\tau-1}+\mu$).
            Finally, for each $w_i$, the requester re-randomizes the newly generated qualities commitment using ``dummy'' commitments i.e., $\forall i \in [n^\tau], \textsf{Com}(0,r_{\star\star,i}^{\tau})$ s.t. $r_{\star\star,i}^\tau\sample \mathbb{Z}_p$. 
            Formally, the requester appends the commitment 
            $
            \textsf{Com}(q_i^\tau,r_{c,i}^\tau)=\textsf{Com}(q_i^{\tau-1},r_{c,i}^{\tau-1}+r_{\star,i}^\tau)\circ\textsf{Com}(\mu, r_{R,i}^\tau)\circ \textsf{Com}(0,r_{\star\star,i}^{\tau})
            $
            to the MT and we denote $r_{c,i}^{\tau}=r_{c,i}^{\tau-1}+r_{\star,i}^\tau+r_{R,i}^\tau+r_{\star\star,i}^{\tau}$. 
            We also denote 
            $
            \textsf{Com}(q_i^\tau,r_{c-d,i}^\tau)=\textsf{Com}(q_i^{\tau-1},r_{c,i}^{\tau-1}+r_{\star,i}^\tau)\circ\textsf{Com}(\mu, r_{R,i}^\tau)
            $
            as the quality commitment without the ``dummy'' commitment with
            $r_{c-d,i}^{\tau}=r_{c,i}^{\tau-1}+r_{\star,i}^\tau+r_{R,i}^\tau$.
            In $\tau_\texttt{ex}$, suppose Bob is a worker who has answered ``Yes'', along with the majority of the rest of the workers, meaning $a_\phi^\tau\text{=``Yes''}$. 
            Alice computes 
            $\textsf{Com}(q_{i,\alpha}^\tau,r_{c,i}^\tau)=\textsf{Com}(q_{i,\alpha}^{\tau-1},\cdot)\circ\textsf{Com}(1, \cdot)\circ\textsf{Com}(0,\cdot)$ and \break $\textsf{Com}(q_{i,\beta}^\tau,r_{c,i}^\tau)=\textsf{Com}(q_{i,\beta}^{\tau-1},\cdot)\circ\textsf{Com}(1,\cdot)\circ \textsf{Com}(0,\cdot)$.
    
        \smallskip
        \noindent\textbf{On-chain.}
            The requester invokes the  \textsc{SubmitQuality} method, to communicate the newly updated qualities to each worker $w_i$.
            The transaction is of the form: $tx_{SubmitQuality} = \big\langle H(r_{k,i}^\tau), r_{R,i}^\tau + r_{k,i}^\tau, r_{\star\star,i}^\tau+r_{k,i}^\tau, \textsf{Com}(q_i^\tau,r_{c-d,i}^{\tau})\footnote{The quality commitment without the ``dummy'' re-randomization.}, pos_{q_i}^{\tau}+r_{k,i}^\tau,\pi_{q_i}^\tau, \pi_{a_i}^\tau\big\rangle$. 
            It includes: 
            \emph{(i)} a worker index $H(r_{k,i}^\tau)$, 
            \emph{(ii)} the new quality randomness $r_{R,i}^\tau + r_{k,i}^\tau$, 
            \emph{(iii)} the randomness of the ``dummy'' commitment $r_{\star\star,i}^\tau+r_{k,i}^\tau$, 
            \emph{(iv)} the commitment of the updated quality $\textsf{Com}(q_i^\tau,r_{c,i}^{\tau})$,
            \emph{(v)} the position $pos_{q_i}^{\tau}+r_{k,i}^\tau$ of the $MT$ leaf storing $w_i$'s new commitment-quality,
            and \emph{(vi)} the proofs $\pi_{q_i}^\tau$, $\pi_{a_i}^\tau$ for \textsc{Authqual} and \textsc{AuthValue}. 
            Last, the requester submits individual payments $p_i^\tau$ through the \textsc{WorkerPayment} method and \texttt{CSTask} reimburses the workers. 
        
    \subsubsection{Quality Verification}
        Quality verification includes only off-chain computations.
        In fact, $w_i$ verifies $\pi_{q_i}^{\tau}$ and $\pi_{a_i}^\tau$ and adopts the updated quality.
        We analyze the case when proofs do not pass verification in~Appendix~\ref{sec:miscelaneous}.
    
        \medskip
            \noindent\underline{Remark: Below-Threshold Worker Participation:}
            Recall that a requester, upon creating a task can specify a minimum participation of $n_{th}^\tau$ workers.
            If $n^\tau<n_{th}^\tau$ the task is considered void and the requester compensates workers for any expenses already made, re-randomizes the quality commitments, and produces corresponding $\pi_{q_i}$; allowing workers to take part in their next task seamlessly.

        \setlength\fboxrule{0.001pt}
        \newcommand{\ffbox}[1]{%
          {
           \setlength{\fboxsep}{-2\fboxrule}
           \fbox{\hspace{1.2pt}\strut#1\hspace{1.2pt}}
          }
        }

    \subsection{zk-SNARKs}\label{subsec:zk-snarks}

        \begin{figure*}[t]
        \centering
        \footnotesize
            \begin{tabular}{|l|l|}
                \hline 
                    \raisebox{-1.5em}{\rotatebox{45}{\footnotesize\textbf{\textsc{ProveQual}}}}
                    &
                    \begin{tabular}{l}
                        \textcolor{blue}{Statement} $\Vec{x}_{PQ}$: $PP,P^\tau,root_{MT},pk_{R},pk_{RA},\textsf{Com}(q_i^{\tau-1},r_{c,i}^{\tau-1}+r_{\star,i}^{\tau-1}),$
                	    $H\big(\textsf{Com}(q_i^{\tau-1},r_{c,i}^{\tau-1}),m_i\big),E(pk_R,a_i^\tau;\cdot),$
                	    $E(pk_R,pa_i^\tau;\cdot)$ \\[0.4em]
                        \textcolor{blue}{Witness} $\Vec{\omega}_{PQ}$: $cert_i, m_i, q_i^{\tau-1}, r_{c,i}^{\tau-1}, r_{\star,i}^\tau,r_{\star\star,i}^{\tau-1},\textsf{Com}(q_i^{\tau-1},r_{c-d,i}^{\tau-1}),
                        a_i^\tau,pa_i^\tau, path_{q_i^{\tau-1}}$ \\[0.4em]
                	    \textcolor{blue}{Language} $\mathcal{L}_{PQ} = \Big\{$~$\Vec{x}_{PQ}$~$|$~$\exists~\Vec{\omega}_{PQ}~  	\mbox{s.t.}$
                	    $\textcolor{red}{\textsf{EdDSAver}}(pk_{RA} ;cert_i,m_i)~\land~$
                        $\textcolor{red}{\textsf{ValidEnc}}\big(E(pk_R,a_i^\tau);a_i^\tau\big)~\land$
                        $\textcolor{red}{\textsf{TaskVer}}(P^\tau;q_i^{\tau-1})~\land~$ \\[0.4em] \hspace{2em}$\textcolor{red}{\textsf{QualVer}}\Big(H\big(\textsf{Com}(q_i^{\tau-1},r_{c,i}^{\tau-1}),m_i)\big),$
                        $\textsf{Com}(q_i^{\tau-1},r_{c-d,i}^{\tau-1});Com(q_i^{\tau-1},r_{c,i}^{\tau-1}),r_{\star\star,i}^{\tau-1},r_{\star,i}^\tau,m_i\Big)~\land~$ \\[0.4em]
                        \hspace{2em}$\textcolor{red}{\textsf{HashComVer}}\Big(H\big(\textsf{Com}(q_i^{\tau-1},r_{c,i}^{\tau-1}),
                        m_i\big);$
                        $\textsf{Com}(q_i^{\tau-1},r_{c-d,i}^{\tau-1}),\textsf{Com}(0,r_{\star\star,i}^{\tau-1}), m_i\Big)~\land$ \\[0.4em]
                        \hspace{2em}$\textcolor{red}{\textsf{ValidEnc}}\big(E(pk_R,pa_i^\tau);pa_i^\tau\big)~\land$ 
                        $\textcolor{red}{\textsf{MTPathVer}}\big(root_{MT} ;
                        \textsf{Com}(0, r_{\star\star,i}^{\tau-1}),$
                        $\textsf{Com}(q_i^{\tau-1},r_{c-d,i}^{\tau-1}),path_{q_i^{\tau-1}},$ $H(r_{k,i}^{\tau-1}, m_i)\big)=1\Big\}.$\\[0.4em]
                \end{tabular} \\
                
                \hline
                
                    \raisebox{-1em}{\rotatebox{45}{\footnotesize\textbf{\textsc{AuthCalc}}}}
                    &
                    \begin{tabular}{l}
                        \textcolor{blue}{Statement} $\Vec{x_{AC}}$: $PP,E(pk_R,a_\phi^\tau;\cdot),\{E(pk_R,a_i^\tau; \cdot)\}_{i \in [n^\tau]}, pk_R$ \\
                        \textcolor{blue}{Witness} $\Vec{\omega}_{AC}$: $sk_R$\\[0.4em]
                        \textcolor{blue}{Language} $\mathcal{L}_{AC}\text{ = }\{ \Vec{x}_{AC}~|~ \Vec{\omega}_{AC} \mbox{~s.t.~}$
                        $\textcolor{red}{\textsf{ValidKeyPair}}(pk_R;sk_R)\land$ 
                        $\textcolor{red}{\textsf{FinAnsVer}}\big( P^\tau,\{E(pk_R,a_i^\tau; \cdot)\}_{i \in [n^\tau]},$
                        $E(pk_R,a_\phi^\tau; \cdot); sk_R\big)=1\}.$ \\[0.4em]
              \end{tabular}\\            
                \hline

                    \raisebox{-1em}{\rotatebox{45}{\footnotesize\textbf{\textsc{AuthQual}}}}
                    & 
                    \begin{tabular}{l}
                       \textcolor{blue}{Statement} $\Vec{x}_{AV}$: $PP,E(pk_R,a_\phi^\tau;\cdot),$  $E(pk_R,a_i^\tau;\cdot), pk_R$\\
                        
                       \textcolor{blue}{Witness} $\Vec{\omega}_{AV}$: $sk_R$\\[0.4em]
                        \textcolor{blue}{Language} $\mathcal{L}_{AV} = \{ \Vec{x}_{AV}~|~ \Vec{\omega}_{AV} \mbox{~s.t.~}$
                        $\textcolor{red}{\textsf{EqCheck}}\big(E(pk_R,a_\phi^\tau),E(pk_R,a_i^\tau),$
                        $pk_R;sk_R\big)\text{$\land$}$
                        $\textcolor{red}{\textsf{ValidKeyPair}}(pk_R;sk_R)\text{=}1\}.$\\[0.4em]

                \end{tabular} \\
        
                \hline

                    \raisebox{-0.8em}{\rotatebox{45}{\footnotesize\textbf{\textsc{AuthValue}}}}
                    & 
                    \begin{tabular}{l}
                        \textcolor{blue}{Statement} $\Vec{x}_{AQ}$: $PP,E(pk_R,a_\phi^\tau;\cdot),$ $E(pk_R,a_i^\tau;\cdot),$\
                        $\textsf{Com}(q_i^\tau,r_{c-d,i}^{\tau}), \textsf{Com}(q_i^{\tau-1},r_{c,i}^{\tau-1}+r_{\star,i}^{\tau}),pk_R$ \\
                        
                        \textcolor{blue}{Witness} $\Vec{\omega}_{AV}$: $sk_R$\\[0.4em]
                        
            
                        \textcolor{blue}{Language} $\mathcal{L}_{AQ}\text{ = }\{ \Vec{x}_{AQ}~|~ \Vec{\omega}_{AQ} \mbox{~s.t.~}$
                        $\textcolor{red}{\textsf{NewQual}}\big(E(pk_R,a_\phi^\tau),E(pk_R,a_i^\tau;\cdot),\textsf{Com}(q_i^\tau,r_{c-d,i}^{\tau}),$
                        $\textsf{Com}(q_i^{\tau-1},r_{c,i}^{\tau-1}\text{+}r_{\star,i}^{\tau}); sk_R\big)~\land$ \\[0.4em]
                        \hspace{10em}$\textcolor{red}{\textsf{ValidKeyPair}}(pk_R;sk_R)=1\}.$ \\[0.4em]
                \end{tabular}\\
                \hline
            
            \end{tabular}  
     
            \caption{Statements, witnesses, and languages for \textsc{ProveQual,~AuthCalc,~AuthValue,} and \textsc{AuthQual}.}
            \label{fig:zk-snarks}
         
        \end{figure*}

        \avecq utilizes four different zk-SNARKs, allowing workers to participate in tasks honestly and requesters to calculate the final answer, updated qualities, and payments correctly, all of which verifiably.  
        Below we explain the functionality of all checks included in the employed zk-SNARKs and we include all statements, witnesses, and languages in Figure~\ref{fig:zk-snarks}.
        
        \smallskip
        \noindent\ffbox{\textsc{ProveQual.}} 
            To participate in $\tau$, worker $w_i$ uses \textsc{ProveQual}'s proving algorithm to generate a proof $\pi_{o_i}^\tau$. 
            \textsc{ProveQual} performs seven checks.
            \textsf{EdDSAVer} checks $w_i$ has already registered by verifying $cert_i$ signature using $m_i$, 
            while \textsf{MPathVer} that $q_i^{\tau-1}$ exists hidden as a commitment in a leaf of $MT$. 
            Additionally, \textsf{TaskVer} ensures $q_i^{\tau-1}$ conforms with $P^\tau$, 
            and \textsf{HashComVer} that $(m_i,q_i^{\tau-1})$ are included in the calculation of  $H\big(\textsf{Com}(q_i^{\tau-1},r_{c,i}^{\tau-1}),m_i\big)$. 
            Last, \textsf{ValidEnc} verifies well-formedness of the ciphertexts in $o_i^\tau$ and \textsf{Qualver} that $m_i$  and the same $q_i$ are included in the hash and the commitment.
            
        \smallskip
        \noindent\ffbox{\textsc{AuthCalc.}} 
            The requester uses $\textsc{AuthCalc}$ to attest to the correctness of $a_\phi^\tau$. 
            First, \textsf{ValidKeyPair} checks if the secret key in the witness consists a valid key-pair with the public key provided by the requester during the task's creation. 
            Afterwards, the circuit decrypts all workers' encrypted answers and computes the final answer based on $P^\tau$. 
            \textsf{FinAnsVer}($\cdot$)$=1$ if the computed final answer matches the decryption of the requester-submitted encrypted final answer. 
        
        \smallskip
        \noindent\ffbox{\textsc{AuthValue} and \textsc{AuthQual}.}  
            These zk-SNARKs attest to the correctness of each worker's answer and quality update, respectively. 
            \textsc{AuthValue} first checks the validity of the requester's key pair with \textsf{ValidKeyPair}. 
            Next, it uses \textsf{EqCheck} to check if the worker's encrypted response equals the final encrypted answer. 
            Likewise, \textsc{AuthQual} first uses \textsf{ValidKeyPair} to check the key pair's well-formedness. 
            Additionally, it uses \textsf{NewQual} to check if the updated qualities were computed correctly using the encrypted final answer and the worker's response. $P^\tau$ is hard-coded in the SNARK.


\section{\avecq: Core Security Properties\label{sec:secanal}}

    We now introduce definitions, theorems, and proofs for the three core properties of our system.  In Appendix~\ref{sec:proofs}, we provide an exhaustive analysis of all properties \avecq satisfies and how our system is safeguarded against various popular attacks.


        \begin{definition}[Anonymity with Unlinkabiltiy]\label{def:anon}
            A crowdsourcing system is anonymous with unlinkability if no PPT adversary $\adv$ has non-negligible advantage in the following game, $\mathcal{G}_{Anon}$.
            \begin{enumerate}[leftmargin=*,noitemsep]
                \item[$\bullet$] \textit{Initialization}: $\mathcal{A}$ specifies  parameters $n,\lambda$. The challenger $\mathcal{C}$ runs the certificate generation algorithm to register $n$ workers such that the maximum worker set for $\mathcal{G}_{Anon}$ is $\mathcal{W}_t=\{w_1,\cdots,w_n\}$ and samples a bit $b\sample \{0,1\}$.
                
                \smallskip
                \item[$\bullet$] \textit{Corruption Queries}: When $\adv$ issues such a query, it specifies a set of workers $\mathcal{W}_c\subseteq \mathcal{W}_t$, and $\mathcal{C}$ $\forall w_i \in \mathcal{W}_c$ provides all respective private information to $\adv$.
                
                \smallskip
                \item[$\bullet$] \textit{Task processing}: $\adv$ specifies a task $\tau$ and a corresponding worker set $\mathcal{W}_\tau$. $\mathcal{C}$, $\forall w_i\notin \mathcal{W}_c$, samples random answers and computes all necessary encryptions, commitments, and proofs using the information for all participating workers, and forwards the responses to $\adv$. $\adv$ computes and communicates to $\mathcal{C}$ all proofs regarding the participation of all $ w_i\notin \mathcal{W}_c$. If $\mathcal{C}$ cannot verify even one of these proofs the game halts.
                
                \smallskip
                \item[$\bullet$] \textit{Challenge}: $\mathcal{A}$ specifies a task $\tau$, two worker sets $\mathcal{W}_\tau,\mathcal{W}_\tau^\prime \subseteq \mathcal{W}_t-\mathcal{W}_c$, with $|\mathcal{W}_\tau|=|\mathcal{W}_\tau^\prime|$,
                and forwards $\mathcal{W}_\tau,\mathcal{W}_\tau^\prime$ to $\mathcal{C}$. If $b=0$ then $\mathcal{C}$ ``runs'' $\tau$ using $\mathcal{W}_\tau$ or uses $\mathcal{W}_\tau^\prime$ otherwise.
                Specifically, $\mathcal{C}$ computes all necessary encryptions, commitments, and proofs for the non-corrupted workers of the two sets and forwards corresponding responses to $\adv$. 
    
                \smallskip
                \item[$\bullet$] \textit{Finalization}: $\mathcal{A}$ sends  $b^\prime\in\{0,1\}$ to $\mathcal{C}$.
    
            \end{enumerate}
    
            \noindent$\mathcal{A}$ wins in $\mathcal{G}_{Anon}$ if $b^\prime=b$. A naive $\adv$, by sampling $b^\prime$ at random has a $\frac{1}{2}$ probability of winning. 
            A system is anonymous if~$\forall$ PPT $\mathcal{A}$, $Adv^{\mathcal{G}_{Anon}}(\mathcal{A})=\abs{\Pr[b^\prime=1|b=1]-\Pr[b^\prime=1|b=0]}\leq negl(\lambda)$.
        \end{definition}
    
        \begin{theorem}
            \emph{Assuming \textsf{Com} is computationally hiding, \textsc{ProveQual} computationally zero-knowledge, and $H$ collision and pre-image resistant, \avecq is anonymous as in Def.~\ref{def:anon}.}
        \end{theorem}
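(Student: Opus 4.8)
The plan is to prove anonymity by a hybrid argument showing that, for any PPT $\adv$, the adversary's view in the \emph{Challenge} phase of $\mathcal{G}_{Anon}$ is computationally independent of which of the two worker sets $\mathcal{W}_\tau,\mathcal{W}_\tau'$ is used, so that $\Pr[b'=1\mid b=0]$ and $\Pr[b'=1\mid b=1]$ differ only negligibly. The new information $\adv$ receives about each non-corrupted challenge worker $w_i$ is the response tuple $o_i^\tau=\langle E(pk_R,a_i^\tau;\cdot), E(pk_R,pa_i^\tau;\cdot), \textsf{Com}(q_i^{\tau-1},r_{c,i}^{\tau-1}+r_{\star,i}^\tau), E(pk_R,r_{k,i}^\tau;\cdot), H(\textsf{Com}(q_i^{\tau-1},r_{c,i}^{\tau-1}),m_i), \pi_{o_i}^\tau\rangle$. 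The proof isolates the three components that could conceivably depend on $w_i$'s registered identity --- the re-randomized quality commitment, the quality tag, and the \textsc{ProveQual} proof --- and neutralizes each one, then observes that everything else is identity-independent by construction.

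Concretely, starting from the real game instantiated with $b=0$ (challenge run with $\mathcal{W}_\tau$), I would pass through: \textbf{(H1)} replace every $\pi_{o_i}^\tau$ that $\mathcal{C}$ generates in the challenge with a proof produced by the \textsc{ProveQual} zero-knowledge simulator from the public statement alone; indistinguishability follows from computational zero-knowledge, via a worker-by-worker sub-hybrid over the $n^\tau$ proofs (and analogously for the honest responses handed out during \emph{Task processing}, if one wants a single clean hybrid). \textbf{(H2)} replace each re-randomized commitment $\textsf{Com}(q_i^{\tau-1},r_{c,i}^{\tau-1}+r_{\star,i}^\tau)$ by a fresh $\textsf{Com}(0,r)$; since $r_{\star,i}^\tau$ and $r_{c,i}^{\tau-1}$ are never revealed to $\adv$ for a non-corrupted worker (the requester only ever sees values masked by the unrevealed $r_{k,i}^\tau$, and $r_{c,i}^{\tau-1}$ lives only inside the \textsc{ProveQual} witness), a distinguisher between consecutive sub-hybrids breaks the hiding of $\textsf{Com}$. \textbf{(H3)} replace each tag $H(\textsf{Com}(q_i^{\tau-1},r_{c,i}^{\tau-1}),m_i)$ by a uniformly random string of length $\lambda$; here a distinguisher must either recover $m_i$ (contradicting $H\text{-}PR$) or force a collision across the worker's tags in different tasks (contradicting $H\text{-}CR$), using that for a non-corrupted worker $m_i$ is never handed to $\adv$ and, after (H1)--(H2), appears nowhere else in its transcript.

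It then remains to note that the surviving components are already independent of $w_i$'s identity: the plaintexts $a_i^\tau$, $pa_i^\tau$, $r_{k,i}^\tau$ are, respectively, a challenger-sampled answer, a fresh per-task compensation address, and a uniformly random nonce, none of which is a function of the worker's registration data; hence the ciphertexts $E(pk_R,\cdot)$ contribute no distinguishing advantage even to an adversary holding $sk_R$ --- which is exactly why IND-CPA of the PKE scheme is \emph{not} among the hypotheses. After (H3) the entire collection of challenge responses is drawn from a distribution that does not reference $\mathcal{W}_\tau$ at all; running the identical chain of hybrids from the real game with $b=1$ lands in the \emph{same} distribution, so the two endpoints coincide and $Adv^{\mathcal{G}_{Anon}}(\adv)\le O\big(n^\tau\cdot(Adv^{ZK}+Adv^{Hiding}+Adv^{H\text{-}PR}+Adv^{H\text{-}CR})\big)=negl(\lambda)$.

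I expect step (H3) to be the main obstacle. Unlike the quality commitment, the leaf $\textsf{Com}(q_i^{\tau-1},r_{c,i}^{\tau-1})$ that gets hashed into the tag is \emph{public} (it is a leaf of $MT$), so the tag's unlinkability rests entirely on the secrecy and unpredictability of $m_i$; making the reductions to $H\text{-}PR$ and $H\text{-}CR$ go through cleanly requires pinning down in the model that a non-corrupted worker's identifier is drawn with sufficiently high (super-logarithmic) min-entropy and is used nowhere else once (H1)--(H2) have scrubbed the proofs and commitments. A secondary point to verify is that the \emph{Corruption} and \emph{Task processing} queries grant $\adv$ no extra leverage on the challenge workers: corrupted workers are explicitly excluded from $\mathcal{W}_\tau\cup\mathcal{W}_\tau'$, and every honest transcript released during \emph{Task processing} has precisely the form neutralized by (H1)--(H3), so the same hybrid bookkeeping absorbs it.
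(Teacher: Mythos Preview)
Your proposal is correct and lands on the same three primitives the paper invokes, but the hybrid decomposition is genuinely different. The paper indexes its hybrids by the size of the overlap $|\mathcal{W}_\tau\cap\mathcal{W}_\tau'|$: starting from full overlap (where the two challenge runs are literally identical) it swaps one honest worker at a time, and argues that each single swap alters exactly one commitment, one tag, and one \textsc{ProveQual} proof, so consecutive hybrids are close by $Adv^{\textsf{C-Hiding}}+Adv^{H\textsf{-CR}}+Adv^{\textsc{ProveQual}\textsf{-ZK}}$. You instead keep the worker sets fixed and peel off the response components \emph{by type}: first simulate all proofs (H1), then dummy out all commitments (H2), then randomize all tags (H3), reaching a distribution that no longer references the set at all, and then meet the $b=1$ run in the middle. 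Your route is a bit more modular and makes explicit why the PKE assumption is unnecessary (the plaintexts themselves are identity-free), and it is also the place where the pre-image assumption in the theorem statement actually gets used---the paper's own proof only cites collision resistance. The one thing to be careful about, which you already flag, is that (H3) as stated needs more than textbook $H\text{-PR}/H\text{-CR}$: replacing $H(\textsf{Com}(\cdot),m_i)$ by uniform really wants the hash to hide a high-entropy $m_i$ even when the first argument is public, which is closer to a correlation-intractability or random-oracle style assumption than raw pre-image resistance. The paper's worker-swap argument glosses over the same point, so you are not losing anything relative to it, but if you want a clean reduction you should state the entropy assumption on $m_i$ explicitly and phrase the hash step as ``a distinguisher yields either a collision or a (partial) pre-image on an input with $\omega(\log\lambda)$ min-entropy.''
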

        
        \begin{proof}
            For an adversary to non-trivially identify whether ``two workers are the same'' or not, one of the following conditions must be true about the adversary: \emph{(i)} compromised the hiding property of the commitment scheme and accessed qualities in the plain, \emph{(ii)} found a collision in the hash function, or \emph{(iii)} compromised the zero-knowledge property of the underlying SNARK for \textsc{ProveQual} and accessed identity-revealing witnesses. 
            In $\mathcal{G}_{Anon}$ the challenge query requires the Challenger to execute a task $\tau$ with either of the following two worker sets $\mathcal{W}_\tau$ and $\mathcal{W}_\tau^\prime$, depending on the challenger bit. 
            Our proof follows a standard hybrid argument across 
            all possible selections of $\mathcal{W}_\tau$ and $\mathcal{W}_\tau^\prime$, specifically over their overlap regarding workers.
            
            We prove indistinguishability of the view of the adversary in $\mathcal{G}_{Anon}$, regardless of the challenger bit, through a series of Hybrid games over all possible $\mathcal{W}_\tau\cap\mathcal{W}_\tau^\prime$.
            We denote $\mathcal{H}_j$ the hybrid where $|\mathcal{W}_\tau\cap\mathcal{W}_\tau^\prime|=j$. 
            Note that, when  $\mathcal{W}_\tau\cap\mathcal{W}_\tau^\prime=n, Adv^{\mathcal{G}_{Anon}=\mathcal{H}_n}(\adv)=0$. 
            The only difference between the views of executing hybrids $\mathcal{H}_j$ and $\mathcal{H}_{j+1}$ lie in the computation of a commitment, a hash and a proof. 
            This means that the advantage a PPT adversary $\adv$ has in distinguishing between the two hybrids is $Adv^{\mathcal{H},j\overset{?}{\approx} j+1}(\adv)\leq Adv^\textsf{C-Hiding}(\adv)+Adv^{H\textsf{-CR}}(\adv)+Adv^\textsf{ProveQual-ZK}(\adv)$.
            Since by assumption the commitment scheme is computationally hiding, the hash function is collision-resistant, and ProveQual is computationally zero-knowledge, no PPT adversary can distinguish between these two views with non-negligible advantage.   
            To conclude the proof we apply this transformation $n$ times from $\mathcal{H}_0$ to $\mathcal{H}_n=\mathcal{G}_{Anon}$. Since $n$ is polynomially bound no PPT adversary has a non-negligible advantage in winning $\mathcal{G}_{Anon}$.
        \end{proof}

        \begin{definition}\label{def:frr}
          A crowdsourcing system is free-rider resistant if no PPT adversary $\adv$ has non-negligible advantage in the following game $\mathcal{G}_{FRR}$.
         \begin{enumerate}[leftmargin=*,noitemsep]
            \item[$\bullet$] \textit{Initialization}: $\mathcal{A}$ specifies  parameters $n,\lambda$. The challenger $\mathcal{C}$ runs the certificate generation algorithm to register $n$ workers such that the maximum worker set for $\mathcal{G}_{FRR}$ is $\mathcal{W}_t=\{w_1,\cdots,w_n\}$ and samples a bit $b\sample \{0,1\}$.
            
            \smallskip
            \item[$\bullet$] \textit{Corruption Queries}: When $\adv$ issues such a query, it specifies a set of workers $\mathcal{W}_c\subseteq \mathcal{W}_t$, and $\mathcal{C}$ $\forall w_i \in \mathcal{W}_c$ provides all respective private information to $\adv$.
            
            \smallskip
            \item[$\bullet$] \textit{Task processing}: 
            $\adv$ defines a task $\tau$, by specifying \textsc{AnsCalc}$^\tau$, \textsc{QualCalc}$^\tau$, and \textsc{PaymCalc}$^\tau$, with range $[minComp^\tau$,$maxComp^\tau]$ for the workers' compensations. $\adv$ forwards all this information to $\mathcal{C}$ who $\forall w_i\notin W_c$, samples random responses, computes all necessary encryptions and proofs using the information for all participating workers, and forwards all responses $\{o_i\}    \forall w_i \in \mathcal{W}_h =$ $\mathcal{W}_t-\mathcal{W}_c$ to $\adv$. 
            
            \smallskip
            \item[$\bullet$] \textit{Challenge}: $\adv$ specifies a task $\tau$ and $\mathcal{C}$ provides $\adv$ with the set of responses $\mathcal{O} = \{o_i\}$, for each $w_i \in \mathcal{W}_h$. 
            $\adv$ then forwards to $\mathcal{C}$ a response $o_j = \langle E^\prime(pk_R,a_{i_1}^\tau; \cdot), E^\prime(pk_R,pa_{i_2}^\tau; \cdot), \textsf{Com}^\prime(q_{i_3}^{\tau-1},r_{c,i_4}^{\tau-1}+r_{\star,i_5}^\tau), E^\prime(pk_R,r_{k,i_6}^{\tau};\cdot),H^\star\left(\textsf{Com}(q_{i_7}^{\tau-1},r_{c,i_8}^{\tau-1}),m_{i_9}\right),pi_{o_{i_{10}}}^{\tau^\prime}\rangle$, where $\exists i^\dag \in \{i_1,i_2\cdots,i_{10}\}$, $i^\dag\in \mathcal{W}_h$, and $\tau^\prime\neq\tau$.
            If $b=0$, $\mathcal{C}$ outputs $p_i^\tau=minComp^\tau$ to $w_i$ or runs \textsc{PaymCalc$(a_{i_1}^\tau,\cdot)$}$\rightarrow p_i^\tau$ otherwise. 

            \smallskip
            \item[$\bullet$] \textit{Finalization}: $\mathcal{A}$ sends  $b^\prime\in\{0,1\}$ to $\mathcal{C}$.

        \end{enumerate}

        \noindent$\mathcal{A}$ wins in $\mathcal{G}_{FRR}$ if $b^\prime=b$. A naive $\adv$, by sampling $b^\prime$ at random has a $\frac{1}{2}$ probability of winning. 
        A system is free-riding resistant if $\forall$ PPT $\mathcal{A}$, $Adv^{\mathcal{G}_{FRR}}(\mathcal{A}) =\abs{\Pr[b^\prime=1|b=1]-\Pr[b^\prime=1|b=0]}\leq negl(\lambda)$.
    \end{definition}
     
    \begin{theorem}
        \emph{Assuming that the PKE scheme $E$ is CPA-secure, the hash function $H$ is collision-resistant, the commitment scheme $Com$ is computationally hiding, and \textsc{ProveQual} is computationally sound, \avecq is free-riding resistant as in Def.~\ref{def:frr}.}
    \end{theorem}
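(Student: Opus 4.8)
The plan is to show that a PPT adversary $\adv$ can achieve non-negligible advantage in $\mathcal{G}_{FRR}$ only by forcing the challenger to treat its ``hijacked'' response $o_j$ as a legitimate participation, and then to reduce that event to breaking one of the four assumed primitives via a case analysis over which component of $o_j$ is carried over from an honest worker. First I would argue that in the \emph{Challenge} phase the challenger's message to $\adv$ depends on the bit $b$ \emph{only} when $o_j$ is accepted, i.e.\ when the attached proof $\pi_{o_{i_{10}}}^{\tau'}$ verifies for the statement $\vec{x}_{PQ}$ induced by the components of $o_j$ and the well-formedness/consistency checks an honest requester runs in \textsc{Response Processing} hold; if $o_j$ fails, no compensation is produced and the two worlds ($b=0$ vs.\ $b=1$) are identical, so $\adv$'s advantage is $0$. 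Hence it suffices to bound $\Pr[E_{\mathsf{acc}}]$, the probability that $\adv$ outputs an accepted $o_j$ in which some index $i^\dagger\in\mathcal{W}_h$ and $\tau'\neq\tau$.

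\textbf{Case analysis on the hijacked component.} Here I would use that \textsc{ProveQual} is an argument of knowledge: an accepting proof yields, via its extractor, a witness $\vec{\omega}_{PQ}=(cert,m,q^{\tau-1},\dots)$ satisfying $\mathcal{L}_{PQ}$ for the induced statement, and I split on which of $i_1,\dots,i_{10}$ equals $i^\dagger$. \emph{(i)} If the honest component is one of the ciphertexts $E(pk_R,a^\tau;\cdot)$, $E(pk_R,pa^\tau;\cdot)$, or $E(pk_R,r_k^\tau;\cdot)$, then \textsf{ValidEnc} forces the extracted witness to contain the plaintext underlying an honest worker's ciphertext, which $\adv$ never saw in the clear; a standard guessing argument over the at most $3|\mathcal{W}_h|$ candidate targets, together with the fact that honest answers are sampled uniformly by the challenger, turns this into an attack on the CPA security of $E$. \emph{(ii)} If the honest component is the re-randomized quality commitment or the quality tag $H(\textsf{Com}(q^{\tau-1},r_c^{\tau-1}),m)$, then \textsf{QualVer}, \textsf{HashComVer} and \textsf{EdDSAver} jointly force the extracted witness to contain the honest worker's $(m,q^{\tau-1})$, but $\adv$ only ever saw a re-randomization of that commitment and the tag, so recovering $q^{\tau-1}$ breaks the hiding of \textsf{Com} and recovering the tag's preimage breaks the (pre-image, hence) collision resistance of $H$. \emph{(iii)} If the honest component is the certificate itself or the Merkle-path material, the extracted $(cert,m)$ must simultaneously satisfy \textsf{EdDSAver}, \textsf{MTPathVer} and be consistent with adversarially chosen ciphertexts and commitments; since an honest worker's registration pins down a unique consistent $(q^{\tau-1},\mathit{path})$, a verifying proof for any other combination is itself a violation of \textsc{ProveQual} soundness (and note a hijacked honest tag is bit-for-bit unique, so an honest requester simply rejects the duplicate).

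\textbf{Replayed proof and conclusion.} The remaining possibility is $i^\dagger=i_{10}$, i.e.\ $\adv$ replays an honest worker's whole proof for a fresh task. Since $\vec{x}_{PQ}$ hard-codes the policy $P^{\tau'}$ and the root $root_{MT}$ fixed at that worker's participation, and SNARK verification is statement-specific with $\tau'\neq\tau$, the replay is accepted only if the newly induced statement equals the old one, which forces a collision in $H$ (in the tag or inside the tree) and is therefore negligible. Collecting all cases, every accepted hijack yields a distinguisher or forger against CPA security of $E$, collision/pre-image resistance of $H$, hiding of \textsf{Com}, or soundness of \textsc{ProveQual}; a union bound over the polynomially many components and task/corruption queries then gives $Adv^{\mathcal{G}_{FRR}}(\adv)\le negl(\lambda)$, so \avecq is free-rider resistant.

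\textbf{Main obstacle.} The delicate part is the case analysis: making precise that the \textsc{ProveQual} relation is \emph{rigid} enough that no mix-and-match of a single honest component with adversarial ones can satisfy $\mathcal{L}_{PQ}$ unless the extracted witness reveals something $\adv$ could not have known --- i.e.\ that \textsf{ValidEnc}, \textsf{QualVer}, \textsf{HashComVer}, \textsf{EdDSAver} and \textsf{MTPathVer} bind the ciphertexts, the re-randomized commitment, the tag, the certificate and the tree leaf to one and the same $(m,q^{\tau-1})$. A related subtlety is that plain soundness does not suffice for the ciphertext case (the induced statement is trivially in the language, since the honest ciphertext is well formed), so one genuinely needs the extractability of the SNARK; and the CPA/hiding reductions must embed the challenge into the honest responses already during \emph{Task processing}, before $\adv$ reveals which index it will hijack --- which is precisely why the guessing step is needed.
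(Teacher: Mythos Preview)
Your approach is correct but genuinely different from the paper's. The paper's proof is a three-line sketch: it asserts that winning requires extracting either an honest worker's answer $a_i$ or quality $q_i$, attributes the first to CPA security and the second to hiding of \textsf{Com}, collision resistance of $H$, and soundness of \textsc{ProveQual}, and then invokes a \emph{hybrid argument} in which honest witnesses are replaced one by one by random values while the corresponding \textsc{ProveQual} proofs are simulated; once all honest responses are simulated, the adversary's view is independent of the actual honest data and the two challenger bits are indistinguishable. Your route is instead a \emph{direct acceptance-probability bound}: condition on the hijacked response being accepted, run the SNARK extractor, and case-split on which index $i^\dagger$ lies in $\mathcal{W}_h$ to derive a concrete attack on one of the four primitives.

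Both strategies work, and each buys something. The paper's hybrid is cleaner for the CPA and hiding reductions because the challenge can be embedded at a fixed hybrid step, so no guessing of the hijack target is needed; on the other hand it silently uses the zero-knowledge simulator, which is not among the theorem's listed assumptions (though it is implicit in ``zk-SNARK''). Your approach is more explicit about why each hijacked component fails and, notably, surfaces a subtlety the paper elides: for the ciphertext case the induced statement \emph{is} in $\mathcal{L}_{PQ}$, so plain soundness is vacuous and one genuinely needs knowledge soundness/extractability---again implicit in the ``AK'' of zk-SNARK but not named in the theorem. Your reduction also needs ZK simulation (to produce the honest proof after embedding a CPA/hiding challenge whose witness you do not know), so in the end both proofs lean on the full zk-SNARK definition rather than just the soundness clause stated in the theorem. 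The ``main obstacle'' you flag---that the \textsc{ProveQual} relation must rigidly bind all response components to a single $(m,q^{\tau-1})$---is exactly the crux, and your handling of it is sound.
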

    
    \begin{proof}
        To break Free-riding Resistance, an adversary has to extract information regarding $w_i$'s answer $a_i$ or quality $q_i$. 
        Violating the first one reduces to breaking the security of the PKE scheme $E$. 
        The second condition is more complex. Specifically, the adversary ``wins'' iff it can break the hiding property of $Com$, find a collision in $H$, or break the soundness of \textsc{ProveQual}. 
        To prove indistinguishability between the view of the adversary regardless of the challenger bit we use a hybrid analysis where we change one by one the witnesses into random values while simulating proofs.
        The total advantage of $\adv$ is $Adv^{\mathcal{G}_{FFR}}(\mathcal{A}) \leq Adv^{Com}(\mathcal{A})+Adv^{CPA-security}(\mathcal{A})+Adv^{H-collision}(\mathcal{A})+Adv^{SNARK-soundness}(\mathcal{A}) \leq negl(\lambda)$. 
    \end{proof}

    \begin{definition}[Policy Verifiability]\label{def:pv}
      A crowdsourcing system is policy-verifiable if both following conditions hold: 
       
      \noindent\emph{(i)} For any task $\tau_i$ with worker set $\mathcal{W}_i= 
      \{w_{i,1},\cdots,w_{i,n_i}\}$, no PPT adversary can output $a_\phi^{\tau_i,\prime} \nleftarrow \textsc{AnsCalc}(A_{n^{\tau_i}}^{\tau_i},P^{\tau_i})$, or $\{q_{i,1},\cdots,q_{i,n_i}\}\nleftarrow \textsc{QualCalc}(Q_{n^\tau},a_\phi^{\tau_i},P^{\tau_i})$, and accepting proofs $\pi_{a_\phi}^{\tau_i}$, $\{\pi_{a_{i,1}}^{\tau_i},\cdots, \pi_{a_{i,n_i}}^{\tau_i}\}$, $\{\pi_{q_{i,1}}^{\tau_i},\cdots, \pi_{q_i,n_i}^{\tau_i}\}$ with more than negligible probability.
    
      \noindent\emph{(ii)} For any task $\tau_j$ no PPT adversary can fabricate $q_i^\star$, for any $w_i$ with $q_i$,
      and an accepting proof $\pi_{o_i^\star}^{\tau_j}\leftarrow\textsc{ProveQual}(q_i^\star,\cdot)$, where $q_i^\star \neq q_i$ with more than negligible probability.
    \end{definition}
    
    \begin{theorem}
        Assuming \textsc{AuthCalc, AuthValue, AuthQual} and  \textsc{ProveQual} are computationally zero-knowledge, \textsf{Com} computationally hiding and $H$ collision-resistant, \avecq satisfies Policy Verifiability as in Def.~\ref{def:pv}.
    \end{theorem}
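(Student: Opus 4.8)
The plan is to establish the two clauses of Definition~\ref{def:pv} separately, each by a reduction to the (knowledge) soundness of the relevant zk-SNARKs, using the binding of \textsf{Com} and the collision resistance of $H$ to close off the residual equivocation routes, and to finish with a union bound over the two (independent) bad events.

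\textbf{Clause (i): correctness of the requester's outputs.} Suppose a PPT adversary, playing a malicious requester on a task $\tau_i$ with worker set $\mathcal{W}_i$, posts an accepting proof $\pi_{a_\phi}^{\tau_i}$ yet outputs $a_\phi^{\tau_i,\prime}\neq\textsc{AnsCalc}(A_{n^{\tau_i}}^{\tau_i},P^{\tau_i})$ (the cases of the $\pi_{a_{i,j}}^{\tau_i}$ and $\pi_{q_{i,j}}^{\tau_i}$ proofs are symmetric). By soundness of \textsc{AuthCalc}, acceptance implies $\vec x_{AC}\in\mathcal{L}_{AC}$, so there is a witness $sk_R$ with \textsf{ValidKeyPair}$(pk_R;sk_R)$ and \textsf{FinAnsVer} satisfied. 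The first check forces $sk_R$ to be \emph{the} secret key matching $pk_R$ (uniqueness of discrete logs), so the decryptions inside \textsf{FinAnsVer} are the true answer plaintexts; the second check then says the plaintext of the posted final-answer ciphertext equals the output of $P^\tau$'s answer rule on those plaintexts, i.e. exactly $\textsc{AnsCalc}(A_{n^{\tau_i}}^{\tau_i},P^{\tau_i})$ --- a contradiction. The same template applies to \textsc{AuthValue} via \textsf{EqCheck}, and to \textsc{AuthQual} via \textsf{NewQual}, which pins the committed updated quality to the Beta-rule update of the committed previous quality against the recovered final answer. Each such event thus occurs only with negligible probability.

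\textbf{Clause (ii): a worker cannot inject a fake quality.} Suppose a PPT adversary (a malicious worker controlling identifier $m_i$) outputs $q_i^\star$ together with an accepting \textsc{ProveQual} proof $\pi_{o_i^\star}^{\tau_j}$, where $q_i^\star$ is not a quality legitimately associated with that worker in $MT$. Soundness of \textsc{ProveQual} yields a witness $(cert_i,m_i,q_i^{\tau-1},r_{c,i}^{\tau-1},\dots,path_{q_i^{\tau-1}})\in\mathcal{L}_{PQ}$: \textsf{EdDSAVer} makes $m_i$ a genuinely registered identifier, \textsf{MTPathVer} makes $q_i^{\tau-1}$ the opening of an actual $MT$ leaf, and \textsf{QualVer}/\textsf{HashComVer} force that same $q_i^{\tau-1}$ (and that same $m_i$) into the posted tag $H\big(\textsf{Com}(q_i^{\tau-1},r_{c,i}^{\tau-1}),m_i\big)$. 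For the re-randomized commitment in the statement to open (per the witness) to a value other than the one honestly committed in the corresponding leaf would break the binding of \textsf{Com}; for the tag to match while the leaf or $m_i$ belongs to someone else would require an $H$-collision. Hence $q_i^\star$ must be a value genuinely placed in $MT$ under the worker's own registered identifier, except with negligible probability; the quality-freshness tagging of Section~\ref{sec:avecq} further ensures it must be the latest such value, since reusing an older leaf forces reusing its previously seen tag.

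\textbf{Expected main obstacle.} The delicate part is clause (ii): rigorously arguing that the leaves ``available'' to a worker are exactly the honestly generated chain of commitments tagged with that worker's $m_i$, and that the interplay of \textsf{MTPathVer}, \textsf{HashComVer}, and \textsf{QualVer} pins the used quality to that chain rather than to \emph{some arbitrary} leaf --- this is precisely where the freshness mechanism does its work, and it requires care about leaf indexing and tag uniqueness. I would also note that, although the statement lists the zero-knowledge of the four SNARKs and the hiding of \textsf{Com}, the verifiability argument itself is carried by their \emph{soundness} (and by commitment \emph{binding}); zero-knowledge and hiding are what guarantee that the reductions leak nothing, so invoking both families of properties together is the cleanest way to package the result.
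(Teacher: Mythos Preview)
Your proposal is correct and follows essentially the same route as the paper: reduce clause~(i) to the soundness of \textsc{AuthCalc}/\textsc{AuthValue}/\textsc{AuthQual} together with the binding of \textsf{Com}, and clause~(ii) to the soundness of \textsc{ProveQual} together with collision resistance of $H$ (and, as you add, binding of \textsf{Com}). Your closing observation is also on point: the paper's own proof invokes \emph{soundness} of the SNARKs and \emph{binding} of the commitment, not the zero-knowledge and hiding properties listed in the theorem statement, so the mismatch you flag is real and your reading of which primitives are actually doing the work is the right one.
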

    
    \begin{proof}
        To break Policy Verifiability an adversary has to violate any of the two conditions. 
        In either case, the adversary essentially needs to provide convincing results and corresponding proofs that do not satisfy the relations in Figure~\ref{fig:zk-snarks} but still pass verification.
        Therefore, violating the first one reduces to breaking the soundness of \textsc{AuthCalc}, or \textsc{AuthValue}, or \textsc{AuthQual} or the binding property of \textsf{Com}. 
        The second condition is more straightforward. Specifically, the adversary ``wins'' iff it can break the soundness of \textsc{ProveQual} or can find a collision in $H$ with non-negligible probability.
        That would contradict the underlying assumptions and therefore no PPT adversary can break the policy-verifiability property of \avecq. 
    \end{proof}

%
%

\section{\avecq: Experimental Evaluation\label{sec:exp}}

    We implement a prototype of \avecq\footnote{The codebase is available at: \url{github.com/sankarshandamle/AVeCQ}.} and report its performance. 
    We deploy the sole on-chain component of $\mathsf{AVeCQ}$, \texttt{CSTask}, over Ethereum using Solidity~\cite{dannen2017introducing}.
    Additionally, we develop requester and worker Java applications that connect with the Rinkeby~\cite{rinkeby} and Goerli~\cite{goerli} test networks using the Web3j framework. 
    We use the Zokrates toolbox~\cite{eberhardt2018zokrates} for all zk-SNARKs implementation. Last, we perform all cryptographic operations over the ALT\_BN128 elliptic curve~\cite{reitwiessner2017eip}.
    As for the answer calculation policies, we consider \emph{(i)} Average (Avg), and \emph{(ii)} $\gamma$-Most Frequent ($\gamma$-MF) with $\gamma\in\{1,3\}$.
    
    First, we measure $\mathsf{AVeCQ}$'s on-chain costs in gas and USD. 
    Next, we examine the impact of varying gas prices on the verification time for th \textsc{SubmitResponse} method. 
    Then, we measure the computation time and communication size  of the off-chain components. 
    Finally, we include E2E analyses that report on the total time, space, and expenses required for the completion of three real-world tasks and compare, where possible, with state-of-the-art systems.
    
    \smallskip
    \noindent\textbf{Setup.} 
        We construct \texttt{CSTask} with bytecode size $3.8$ KB with Solidity and deploy on the Rinkeby and Goerli testnets with the Web3j framework. 
        We monitor the created transactions through Etherscan~\cite{etherscan}. 
        We create three different versions for \textsc{AuthCalc}, \textsc{AuthQual}, and \textsc{AuthValue} zk-SNARKs for our three crowdsourcing policies.
        For our off-chain cryptographic primitives, we use the Zokrates-accompanying pycrypto library.
        For hashing we use Pedersen hash~\cite{hopwood2016zcash} and ElGamal cryptosystem as the PKE scheme, which are zk-SNARK-friendly. We execute all off-chain-component experiments on a $40$-core server with Intel(R) Xeon(R) CPU E5-2640 v4 @2.40GHz and $1$ GB RAM per core.

    \subsection{On-chain Measurements} 
    
        \noindent\textbf{Gas Consumption and Costs.} 
            We measure the gas consumption for $\texttt{CSTask}$ deployment and method execution. 
            We map gas to USD, using the default gas price of $1$ Gwei = $1\times 10^{-9}$ ETH, base fee $5$ GWei\footnote{At the time of writing, the Ethereum Mainnet base fee was $\approx5$ GWei~\cite{ethgas}.} and the price of $1$ ETH = $1554.89$ USD (24/01/2023). Notably, deploying \texttt{CSTask}
            consumes $\approx 1.34$ million gas units and
            costs $\approx 12.49$ USD. Further, the method \textsc{CreateTask} consumes $363491$ gas units costing $3.39$ USD. \textsc{SubmitResponse} requires $394604$ units and costs the participating worker $3.68$ USD.  Uploading \textsc{AuthCalc} proof consumes $120772$ gas units and costs the requester $1.13$ USD.

        \smallskip
        \noindent\textbf{Priority Fee vs Transaction Processing Time.}
            Transaction processing time behaves in an ``inversely proportional'' manner to the chosen priority fee, $\delta$.
            We observe that depending on the urgency of the task, workers may opt to use gas prices other than the default. 
            Motivated by this, we examine the average verification time (in blocks) of \textsc{SubmitResponse} for each $\delta\in \{0.5,1,1.1,1.5,2,5,10\}$  across $200$ instances, and depict the result in Figure~\ref{fig:exp-subres}. 
            As shown, for a gas price of $1.1$ GWei, the average verification time on Rinkeby is $2.54\pm 0.7$ blocks ($\approx30$ secs). With Goerli, the verification time marginally increases to $3.52\pm0.8$ blocks ($\approx42$) seconds. 
            As expected, we observe that the verification time decreases as $\delta$ increases. 
            However, the increase is minimal ($\approx0.7$ block across both test networks) and the standard deviation remains almost constant, for prices ranging from $1.1$ GWei to $10$ GWei. 
            Importantly though, on Rinkeby, for gas prices $<1.1$ GWei, the difference in processing time is considerable and with high deviation. E.g., for $0.5$ GWei the average processing time is $8.68$ blocks, with a deviation of $\approx 2$ blocks. 
            
            The Ethereum Main network shows a similar trend. For instance, the verification time decreases from $\approx 10$ minutes with $\delta=0$ to $\approx 3$ minutes with $ \delta=1$ GWei~\cite{ethgas}. As each crowdsourcing task's time sensitivity is absorbed in its deadline, we believe that $1$ GWei is a reasonable priority fee. For shorter deadlines, a worker can accordingly increase its priority fee.

            \begin{figure}[t]
                \centering
         
                \includegraphics[width=\columnwidth,trim={55pt 35pt 80pt 260pt},clip]{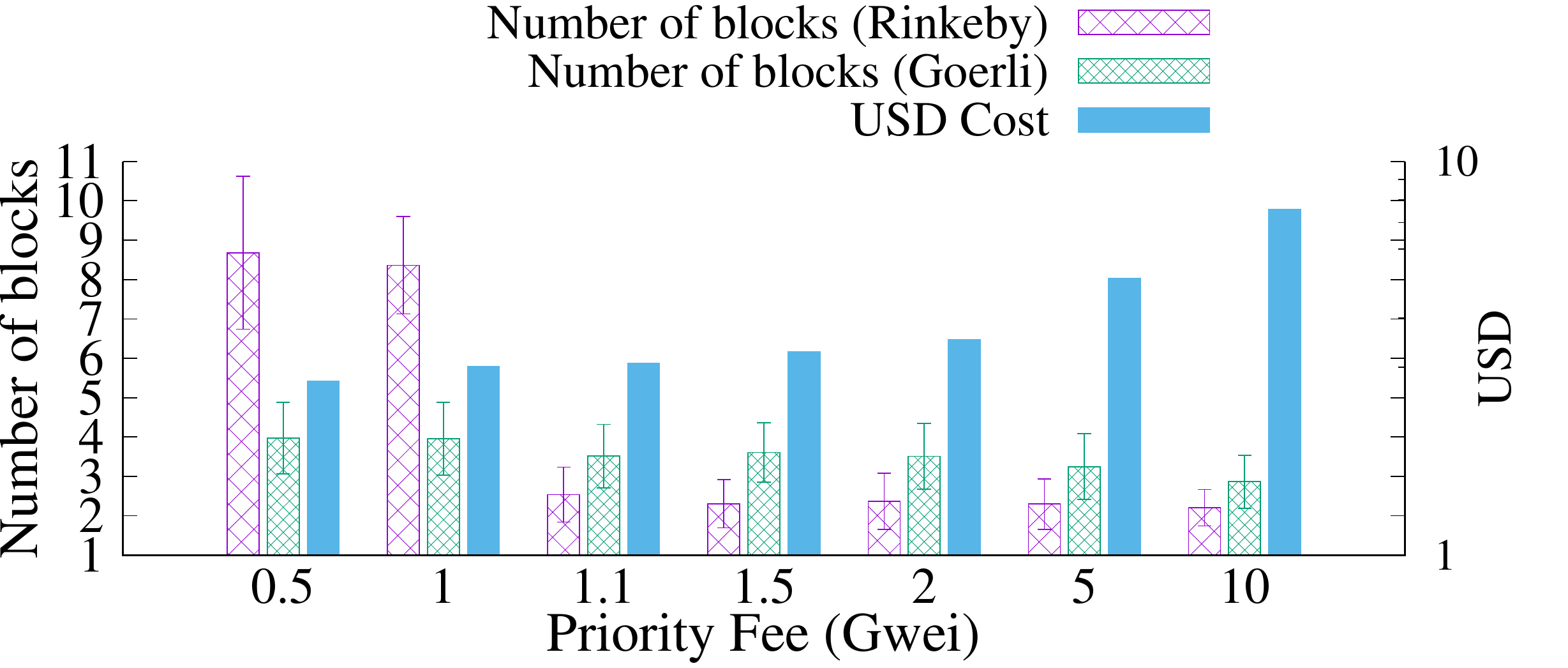}
           
                \caption{\textsc{SubmitResponse} Verification Time vs Priority Fee. Here, the base fee is 5 GWei. } 
             
                \label{fig:exp-subres}
                   \end{figure}
      \begin{figure}[t]
            \centering
            \includegraphics[width=\columnwidth]{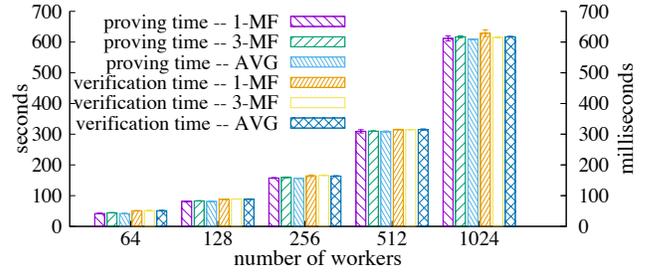}
         
            \caption{\textsc{AuthCalc}: Proving \& Verification Time vs. \break number of workers with $4$ choices.}
          
            \label{fig:exp-authcalc1}
           
            \end{figure}

    \subsection{Off-chain Measurements}

        \noindent\textbf{Non-SNARK Computations.}
            Both the requester and the workers generate ElGamal ciphertexts, Pedersen hashes and commitments.
            A single encryption takes $\approx11$ms, while a decryption takes $<1$ms. 
            Constructing a pre-image and computing a Pedersen hash requires $\approx 59$ms, while generating a Pedersen commitment takes $<1$ms. Overall, the communication size between the requester and a worker is $384$B and between a worker and the requester is $448$B.

        \smallskip
        \noindent\textbf{SNARKs Performance.} 
            Here we present measurements related to the proving and verification time for all employed zk-SNARKs. 
            As we show below for all three different applications and policies the proving time is in the order of seconds while the verification time in milliseconds.
            We report the average data across $10$ runs.
            
            \begin{figure}[t]
                     \includegraphics[width=0.9\columnwidth]{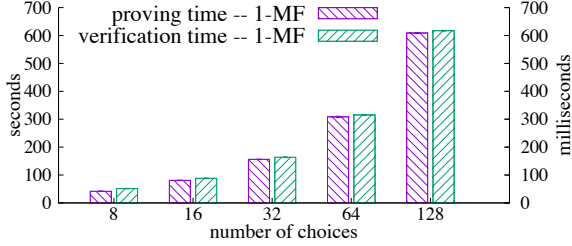}
                    \caption{\textsc{AuthCalc}: Proving \& Verification Time vs. number of choices for \textsc{AnsCalc}$=1$-MF with $1024$ workers.}
                    \label{fig:exp-authcalc2}
           \end{figure}
               \begin{figure}[t] 
                
                \includegraphics[width=0.9\columnwidth,trim={55pt 40pt 80pt 260pt},clip]{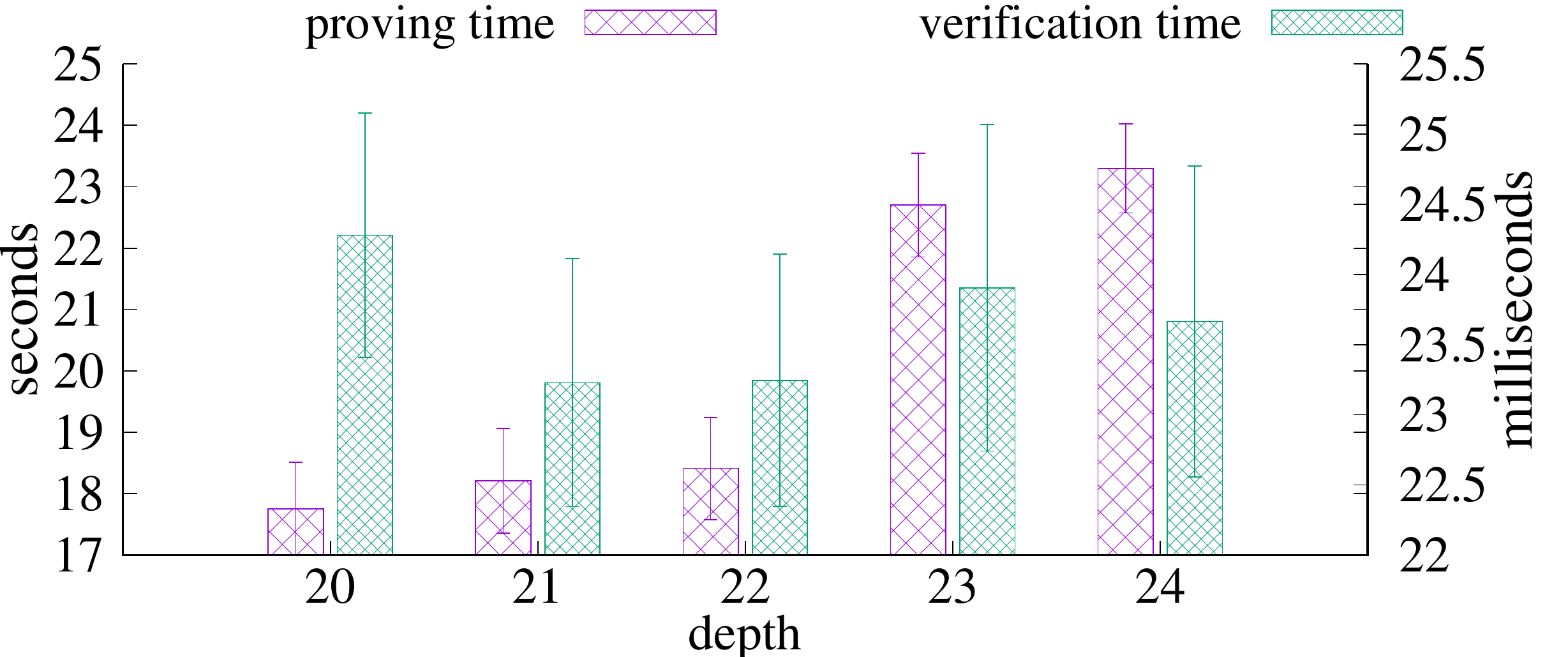}
             
                \caption{\textsc{ProveQual}: Proving \& Verification Time vs. $MT$ Depth.}
              
                \label{fig:exp-provequal}

            \end{figure}

        \smallskip
        \noindent\underline{\textsc{ProveQual}} 
            includes $8$ checks as described in Figure~\ref{fig:zk-snarks}, one of which is a membership proof for $MT$. 
            This is the only variable for the generation of $\pi_{o_i}^\tau$ and we examine how the Merkle tree depth affects the proving and verification times. 
            We present our findings for varying depths from $20$ to $24$ in Figure~\ref{fig:exp-provequal}. 
            Notably, to generate a proof for depth $23$ a worker needs $<23$ secs, which the requester verifies in $\approx24$ msecs.

        \smallskip
        \noindent\underline{\textsc{AuthCalc}} 
            embodies the implementation of \textsc{AnsCalc}. As such, we provide the implementation for the $\gamma$-Most Frequent (MF) and the Average (Avg) algorithms. 
            The performance of \textsc{AuthCalc} depends on the number of  \emph{(i)} workers and \emph{(ii)} choices. 
            Figure~\ref{fig:exp-authcalc1} depicts the performance of both implementations (for $\gamma=\{1,3\}$) with $4$ choices and workers varying from $64$ to $1024$. 
            Contrary, in Figure~\ref{fig:exp-authcalc2} we show the performance when fixing the number of workers to $1024$ and varying the choices from $4$ to $64$. 
            Crucially, the results confirm the practicality of our design, e.g., a requester using \avecq can output a proof pretending to be the most frequent answer, for $1$K workers and $64$ choices, in $<15$ mins. 
            We remark here that Avg is independent of the number of choices. Last, in all cases above, the verification time is $<0.7$ secs.

        \smallskip
        \noindent\underline{\textsc{AuthValue} and \textsc{AuthQual}}.
            The former produces a proof for the correctness of response $w_i$.
            For \textsc{AnsCalc}=$1$-MF we establish $a_i^\tau$ as correct if $a_i^\tau=a_\phi^\tau$, while for \textsc{AnsCalc}=Avg if it is within a $P^\tau$-specific range around the final answer.
            Now, \textsc{AuthQual} produces a proof regarding the corresponding update of $q_i$ based on the correctness of $a_i^\tau$.
            Recall that the requester does not have access to the qualities in the plain.
            To update a worker's quality, the requester ``adds'' a Pedersen commitment of $0$ or $1$ to the worker-provided commitment appropriately. 
            These zk-SNARKs have a constant number of constraints, and we provide the per-worker proof generation and verification times in Table~\ref{tab:2snarks}.

            \begin{table}[t]
                \centering
                \small
                \begin{tabular}{ccrr}
                    \toprule
                     \textbf{\textsc{AnsCalc}}    &  \textbf{zk-SNARK} & \textbf{Proving  (s)} & \textbf{Verification (ms)}  \\
                    \midrule
                    \multirow{2}{*}{\textbf{1-MF}}  &  \textsc{AuthQual} & $2.29$ & $11.8$	 \\ 
                    & \textsc{AuthValue}  &    $1.93$ & $11.2$ \\
                    \midrule
                    \multirow{2}{*}{\textbf{3-MF}}  &  \textsc{AuthQual} & $3.59$ & $12.3$	 \\ 
                    & \textsc{AuthValue}  &    $3.25$ & $11.7$ \\
                    \midrule
                    \multirow{2}{*}{\textbf{Avg}} &  \textsc{AuthQual}  & $2.91$ &	$12.1$ \\ 
                    & \textsc{AuthValue}  & $1.96$ &	$10.7$   \\
                    \bottomrule
                \end{tabular}

                \caption{\textsc{AuthQual} and \textsc{AuthValue} proving and verification time for different \textsc{AnsCalc} mechanisms.}
             
                \label{tab:2snarks}
            \end{table}

    \subsection{End-to-End (E2E) Run Time}
    
        To further demonstrate the practicality of $\mathsf{AVeCQ}$, we measure its performance on popular crowdsourcing tasks~\cite{krivosheev2020detecting,zheng2017truth,sun2018truth} simulated on real-world datasets~\cite{welinder2010multidimensional,he2016ups,perez2022dataset}. We measure
        each task's time from their deployment until completion. 
        
        \smallskip
        \noindent\textbf{Tasks.} 
            First, we consider Image Annotation~\cite{shah2015double},  commonly used  in crowdsourcing to generate datasets to train Machine Leaning (ML) models. 
            Similar to~\cite{krivosheev2020detecting}, we consider a task such as identifying whether a given image contains a duck or not. We deploy \texttt{CSTask}$_{ML}$ for this task. 
            Second, we consider the task of generating the Average Review of an online product through  \texttt{CSTask}$_{Review}$~\cite{krivosheev2020detecting}. 
            Last, we deploy \texttt{CSTask}$_{Gallup}$ to estimate public opinion through a Gallup poll. 
            E.g. to determine the ``COVID-19 fear level'' for citizens living in major Spanish cities~\cite{perez2022dataset}.

        \smallskip
        \noindent\textbf{Datasets.}  
            For \texttt{CSTask}$_{ML}$, we use worker reports obtained using a real-world image annotation dataset, Duck~\cite{welinder2010multidimensional}. Further, \texttt{CSTask}$_{ML}$ has \textsc{AnsCalc}=1-MF, $n_{th}^\tau=39$, and $|A^\tau|=2$. 
            For \texttt{CSTask}$_{Gallup}$, the worker reports are from the real-world survey conducted by P{\'e}rez \emph{et al.}~\cite{perez2022dataset}. Specifically, we have \textsc{AnsCalc}=$3$-MF, $|A^\tau|=5$ and we subsample $n_{th}^\tau=64$ worker reports from the $\approx$ 8K available. 
            Last, \texttt{CSTask}$_{Review}$ acquires an average review, with individual worker reports taken from the Amazon review dataset~\cite{he2016ups}. More concretely, \texttt{CSTask}$_{Review}$ has \textsc{AnsCalc}=Avg, where each worker can report a rating from the set $\{1,2,3,4,5\}$, i.e., $|A^\tau|=5$.
            We subsample $n_{th}^\tau=128$ worker reports from the $\approx 1M$ available.

        \begin{figure}[t]
            \centering
            \includegraphics[width=0.9\columnwidth,trim={55pt 40pt 80pt 230pt},clip]{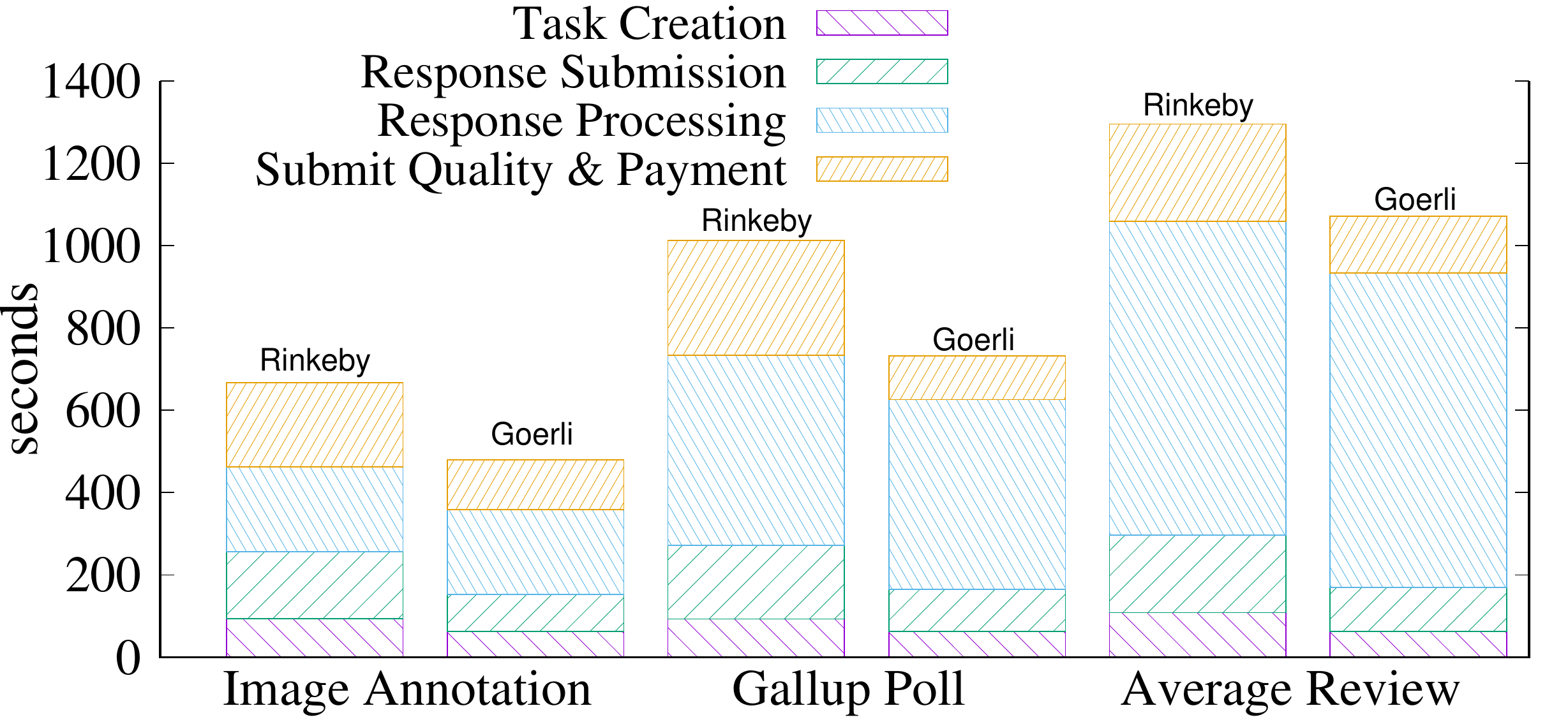}
               
            \caption{E2E completion time for different tasks with base fee 5 GWei and Priority fee 1 Gwei.}
            \label{fig:usecase}
            
        \end{figure}

        \smallskip
        \noindent\textbf{Results.} 
        Figure~\ref{fig:usecase} depicts our results with $\delta=1$ GWei.
        Depending on the task, a worker's response takes 93.34-108.52 secs (Rinkeby) and 62.36-63.12 secs (Goerli) to construct, submit, and get verified on-chain.
        Depending on the underlying crowdsourcing policy, the number of workers, and choices, to calculate the result and produce the corresponding proof the requester takes from $206$ to $764$ secs, while for uploading the quality updates and payments it takes 204-280 secs on Rinkeby and 106-140 secs on Goerli. Last, the quality verification phase takes $<1$~sec.

    \subsection{Comparing \avecq with Prior Works}

    We remark that \avecq outperforms state-of-the-art protocols~\cite{li2018crowdbc,lu2018zebralancer,duan2019aggregating,lu2020dragoon} in multiple metrics, despite incorporating worker quality and achieving stronger security properties. 
    Below we compare wherever and however possible with such systems. We  discuss initially E2E performance and then comparable individual aspects.

    \smallskip
    \noindent\textbf{E2E Comparison.}
    Surprisingly, very few prior works report the computational expenses or other overheads for an E2E execution of a specific task holistically. 
    In fact, only~\cite{lu2018zebralancer} and~\cite{duan2019aggregating} report such data. 
    Table~\ref{tab:e2ecomptable}, includes a head-to-head comparison between our system and these works. 
    As shown, \avecq outperforms~\cite{lu2018zebralancer} in  time and~\cite{duan2019aggregating} in  gas required while achieving similar time efficiency.

    \smallskip
    \noindent\textbf{Gas Consumed.} 
        Task creation requires $\approx4\times$ and response submission $\approx7.4\times$ more gas in \cite{lu2020dragoon}, compared to $\mathsf{AVeCQ}$.
        Additionally, the deployment cost of the smart contracts in~\cite{li2018crowdbc,duan2019aggregating} is comparable to the one of \texttt{CSTask}. Whereas, worker responses require $\approx 2$ times more gas in \cite{duan2019aggregating}. Further, the authors of~\cite{li2018crowdbc} perform the majority of their protocol operations on-chain, which would incur prohibitive gas (and monetary) costs for tasks with a high number of workers, based on a similar approach followed in~\cite{agorapreprint}.

    \smallskip
    \noindent\textbf{Supporting Workers and Task Choices.} 
        Last, unlike \cite{lu2018zebralancer,lu2020dragoon} we provide results for significantly greater number of workers and task choices. 
        The authors in \cite{lu2018zebralancer} provide results up to $11$ workers and $2$ choices, and only $4$ workers in \cite{lu2020dragoon}. 
        To the best of our knowledge, \avecq is the first anonymous crowdsourcing system measured against $>1$K workers and $>100$ choices (Figures~\ref{fig:exp-authcalc1} and \ref{fig:exp-authcalc2}).

        \begin{table}[t]
            \footnotesize
            \centering
            \begin{tabular}{ccc}
                \toprule
                \xrowht[()]{10pt}
                {\makecell{\textbf{Tasks}}} &   {\makecell{\textbf{Image Annotation}\\ ($|W|=39, |A|=2$)}}  & {\makecell{\textbf{Average Review}\\ ($|W|=128, |A|=5$)}}  \\
                \midrule
                \xrowht[()]{10pt}
                \makecell{\textbf{\avecq}}  & \makecell{Time: $<8$mins \\ Gas: $<19$MWei}   &  \makecell{Time: $<18$mins \\ {Gas:} $<55$MWei}\\
                \hdashline 
                \xrowht[()]{25pt}
                \makecell{\textbf{E2E Comparable}\\ \textbf{Blockchain-based} \\ \textbf{Systems}} &   \makecell{ZebraLancer~\cite{lu2018zebralancer} \\ Time: $>7$h$^\star$ \\ Gas: No data} & \makecell{Duan et al.~\cite{duan2019aggregating} \\ Time: $30$min$^\star$ \\ Gas: $>416$MWei$^\star$} \\[-5pt]
                \bottomrule
                \multicolumn{3}{c}{$^\star$ denotes extrapolation of results. }
            \end{tabular}

            \caption{Completion time of \avecq (on Goerli) vs contemporary works in popular crowdsourcing tasks. \label{tab:e2ecomptable}}

        \end{table}

%
%

\section{Discussion \& Conclusion}\label{sec:discon}

    \noindent\textbf{Discussion.}  
        Using gold-standard tasks to evaluate the ``trustworthiness'' of a worker is quite popular~\cite{GaoWL19,goel2019deep,pace21,bhitVLDB22}. \avecq can support such tasks. In fact, for these types of tasks, all requester off-chain computations are almost non-existent since all workers know the final answer upon completion of the task and, therefore, can verify the correctness of their quality updates and payments. Therefore, \avecq operates in a richer setting functionality-wise, and specifically for this subset of cases, it is even more efficient.       
        On a different note, there indeed exist quality/reputation systems based on more complex algebraic relations than additions (e.g., products~\cite{Farm} or Gompertz function~\cite{KanaparthyDG22}).
        We acknowledge this and identify as an interesting future direction the construction of an even more general protocol that can incorporate sophisticated quality scores inexpressible via homomorphic commitments (e.g.,~\cite{Gom1,Farm,KanaparthyDG22}).
        Last, \avecq is task/policy/blockchain agnostic at its core. The only requirement is that the policy can be expressed as a circuit and thus \textsc{AnsCalc}, \textsc{QualCalc}, and \textsc{PaymCalc} as zk-SNARKs.

    \smallskip\noindent\textbf{Conclusion.}
        In this work, we proposed $\mathsf{AVeCQ}$, the first anonymous crowdsourcing system with verifiable worker qualities. 
        $\mathsf{AVeCQ}$ leverages a fusion of cryptographic techniques and is built atop a blockchain that supports smart contracts. 
        Moreover, we demonstrated via extensive experimentation that our system is deployable in real-world settings. 
        Additionally, increases in the number of workers/choices for popular task policies do not impact the performance of $\mathsf{AVeCQ}$.
        In conclusion, \avecq outperforms state-of-the-art and guarantees stronger 
        security and privacy properties.

%
%


\bibliographystyle{plain}
\bibliography{ref}

%
%

\appendix

\section{Security Properties of \avecq}\label{sec:proofs}



    
    \subsection{Anonymity (with Unlinkability)}
    The goal is to define the property so that no entity can establish a connection between a worker's identity and its responses across tasks. 
    We therefore need to take a closer look to the response $o_i$.
    Remember that a worker $w_i$ submits $o_i^\tau$ to a task $\tau$ in the following form:
      $o_i^\tau=\langle E(pk_R,a_i^\tau;\cdot),E(pk_R,pa_i^\tau;\cdot),$ $\textsf{Com}(q_i^{\tau-1},r_{c,i}^{\tau-1}\text{+}r_{\star,i}^\tau),$ \break $E(pk_R,r_{k,i}^{\tau};\cdot),$ $H\big(\textsf{Com}(q_i^{\tau-1},r_{c,i}^{\tau-1}),m_i\big),\pi_{o_i}^\tau\rangle$.

    First, we look at the public address $pa_i$: 
    if is not unique across tasks, then any requester (or participant in general) can connect the same $pa_i$ with two discrete responses and thus identify-link a worker across two tasks. 
    Notably, this is the case regardless of whether the uploading mechanism is anonymous or not. 
    For example, in Ethereum a worker may select to respond to different tasks from equally different addresses.
    However, this measure alone is not sufficient. 
    In our system's case, by including $pa_i$ encrypted in the response tuple, we ensure that only the requester can identify any connection between the uploading mechanism and the public addresses that receive compensation after the task has been concluded.
    This, in turn, translates into the fact that \emph{any} other entity that is even monitoring \emph{all} blockchain transactions may connect each worker across \emph{at most} two tasks. 
    To avoid even such a case though, multiple approaches have already been proposed in the literature, from anonymous tokens e.g., ZCash~\cite{hopwood2016zcash} to mixnets or tumblers~\cite{tumblebit}; techniques easy to plug in $\mathsf{AVeCQ}$.
    
    We now consider which other element of the response can be used non-trivially to link a worker across two tasks.
    We observe that the quality scores can ``reveal'' the identity of a worker across two tasks.
    If a worker submits its quality in the response as a plaintext then any participant can possibly track them across tasks e.g., workers with high or low qualities.
    However, even if they are provided in a hiding manner i.e., as commitments in $\mathsf{AVeCQ}$, recall that it is actually the requester that updates the quality scores of the participating workers and then commits to them.
    Therefore, if qualities are used exactly as provided by the last requester then back-to-back responses can be linked across these two tasks, forming in fact the full chain or responses across \emph{all} tasks.
    To avoid this we require from workers to re-randomize the quality commitments before participating in the next task. 
    
    To prove that \avecq satisfies Anonymity we rely on the hiding property of \textsf{Com}, the collision-resistance of $H$, and the zero-knowledge property of \textsc{ProveQual}, assuming that no non-trivial information about $q_i$ is leaked to the requester or anyone else during response submission or processing. We refer to trivial leakage as information that one can decipher from the publicly available task policy (e.g., $q_i$ surpasses a certain $q_{th}^\tau$).
    We design a game $\mathcal{G}_{Anon}$ to formally capture anonymity as a property.
    At a high level, we state that a crowdsourcing system is anonymous if for any two distinct tasks $\tau_i,\tau_j$ with corresponding worker sets $\mathcal{W}_i, \mathcal{W}_j$, no entity can non-trivially distinguish whether $\exists~w_\star$ s.t. $w_\star \in \mathcal{W}_i$ and $ w_\star \in \mathcal{W}_j$.

    \smallskip
    \subsection{Free-rider Resistance}
    For this property we try and capture the following adversarial behavior: A worker whose goal is to utilize another worker's answer and/or quality and get compensated. We present and analyze all three cases below.

    First, workers might try to elude task execution or inflate total worker rewards\footnote{E.g., a worker who cannot clear the quality threshold for a task can collude with another worker, submit a duplicate response, and split the reward.} by submitting other workers' responses as their own. 
    Generally in such a case the requester would face the challenge of deciding which of the duplicate tuples was submitted first/legitimately. 
    However, since \avecq utilizes a blockchain all submitted responses are timestamped, meaning that even if a worker copies all parts of a response and only changes the public address part (or even submit the same response entirely), the requester can disregard any ``duplicate'' records.
    Thus, such behavior is counterincentivized as workers who pawn someone else's response as their own \emph{(i)} have to expend resources for uploading it on-chain and \emph{(ii)} \emph{deterministically} will get no reward.
    
    Second, workers might try to use someone else's quality to pass a specified threshold. 
    In $\mathsf{AVeCQ}$, when responding to a task, $w_i$ ties specifically its $q_i$ to the underlying $m_i$ of $cert_i$. 
    Then, each worker produces a proof of validity for all hashes, commitments, and encryptions in the submit-response tuple using \textsc{ProveQual}. From the soundness property of the employed zk-SNARK and the hiding property of the commitment scheme, no polynomially bound worker can produce a convincing proof  without a witnesses and from the zero-knowledge property of \textsc{ProveQual} no polynomially-bound worker can extract any underlying witness from $\pi_{o_i}^\tau$.

    Third, to ensure no worker can utilize the encrypted-answer part of another worker's response and use it with its own quality commitments we rely on the soundness and zero-knowledge property of \textsc{ProveQual}. Similarly to the above, no polynomially-bound worker can extract another worker's answer $a_i$ from its response $o_i$. 
    We design a game $\mathcal{G}_{FRR}$ to formally capture this property.
    At a high level, we state that a crowdsourcing system is secure against free-riding attacks if the adversary can submit a response $o_i$ containing \emph{any} element from another worker's response $o_i^\prime$ and can distinguish between getting the legitimate compensation based on the answer included in $o_i$ or the minimum possible compensation.

    \smallskip
    \subsection{Policy-Verifiable Correctness}
    No requester can produce convincing fabricated  $a_\phi^{\tau,\prime}$ and qualities.
    Additionally, no worker-provided response that contains an answer $a_i \not\in A^\tau$ is included in the $a_\phi^\tau$ calculation.
    To prove this, we rely on the soundness of the zk-SNARKs' outputs: $\pi_\phi^\tau \leftarrow$\textsc{AuthCalc}($\cdot$), $\pi_{q_i} \leftarrow \textsc{AuthQual}(\cdot)$, and $\pi_{a_i}^\tau \leftarrow \textsc{AuthValue}(\cdot)$. The trusted RA setups all SNARKS for policy $P^\tau$. Thus, no requester can produce a valid proof that pertains to a fabricated result 
    $a_\phi^{\tau,\prime}\neq a_\phi^\tau$ or qualities for a different policy $P^{\tau,\prime}$ without breaking the underlying zk-SNARK soundness property.

    \subsection{Other Properties\label{subsec:sec_B}}
    
    \smallskip
    \noindent\textbf{Sybil-Attack Resistance.}
    These attacks correspond to entities forging identities to participate in tasks. 
    For instance, a worker with a ``low'' quality may prefer to generate fresh identities to have higher chances of clearing quality thresholds and being able to participate in future tasks.
    Moreover, workers might attempt to participate in a task multiple times, under different identities, and reap rewards almost arbitrarily.
    On top of that, especially in tasks where the final answer is not known a-priori, a worker that participates arbitrarily-many times in a task can launch even more sophisticated attacks (see Section~\ref{sec:miscelaneous}). 
    The severity of Sybil attacks in crowdsourcing systems has been studied extensively and we rely on a trusted RA to combat them, similarly to prior works~\cite{lu2018zebralancer,lu2020dragoon,bhitVLDB22}.
    Recall that \avecq includes a registration phase where every worker provides a secret message $m_i$ and acquires a participation certificate in the form of an EdDSA signature $cert_i$ through the RA.
    To submit $o_i$, each worker has to generate a \textsc{ProvQual} proof, which includes a verification of $cert_i$ based on $m_i$. 
    From modeling the RA as trusted and the soundness property of \textsc{ProveQual}, no polynomially-bound worker can produce a convincing proof without the respective witnesses.

    \smallskip\noindent\textbf{Payment and Quality Deprivation Resistance.}
    No requester can avoid paying. 
    \texttt{CSTask} method \textsc{PayWorker} handles the payments of workers during the quality verification phase. Therefore, rightful reimbursements are allocated to workers, assuming an honest majority in the on-chain validators. 
    Any worker $w_i$ who submits a correct answer $a_i^\tau$ and yet is not paid for participating in $\tau$, during the protest period, can contact the RA with the evidence $\pi_{a_i}^\tau,o_i^\tau$. Upon verifying the proof, the RA can confiscate the requester funds, pay the worker, and impose added penalties (similarly, for when a requester does not upload on-chain a quality update).

    \subsection{Miscellaneous Attacks}\label{sec:miscelaneous}
    
    Below we analyze attacks that our crowdsourcing and threat model (Section~\ref{sec:model}) allow and possible mitigation techniques.
    
    \smallskip
    \noindent\textbf{Final-Answer Skewing.}
    First, since we allow arbitrary collusions between the entities and our system imposes no restrictions to the task policy, the following attack is enabled. Consider our running-example task where $|\mathcal{W}^\text{Yes}|=|\mathcal{W}^\text{No}|+1$. An adversarial requester can skew the final answer if he colludes with even just two additional workers. Interestingly, this type of attack in combination with certain task policies can even result in the requester giving out less rewards totally! 
    
    The following example is illuminating: Consider the case where the final answer is calculated as the average of the workers numerical responses and the qualities/rewards are calculated based on a proximity deviation between the worker's answer and the final answer. 
    In this case, the requester can collude with even just one worker who just needs to purposefully submit an ``overshot'' answer, affecting the final answer enough to reduce total expended rewards. 
    Our system does not safeguard workers against such game-theoretic attacks, which is in line with the broad mechanism design~\cite{KanaparthyDG22,PTS,BTS,PTSC,Farm,RBTS} and anonymous crowdsourcing~\cite{lu2018zebralancer,duan2019aggregating} literature. 
    Additionally, in tasks where the final answer of a task is calculated as a function of participating worker responses, workers may opt to misreport their answer to try and ``guess'' the final answer and get rewarded.
    Popular approaches to avoid such attacks is to impose constant rewards for all participating workers or enable tasks with only publicly available ground truth ~\cite{savage1971elicitation,lambert2009eliciting}. 
    In \avecq we adopt a more expressing model concerning the task policy, which includes the above approaches, but is not restricted to these.
    
    \smallskip
    \noindent\textbf{Quality Inflation.}
    Another possible (and rather subtle) threat stems from the fact that in \avecq workers verify just their own quality updates.
    In fact, if a worker realizes that its quality is \emph{lower} than what it should be or that the requester-provided proofs do not pass verification, it is incentivized to protest and correct the wrongdoing.
    However, if the updated quality is \emph{higher}, then the worker is incentivized to avoid protesting!  
    This in turn, enables requesters to collude with workers to raise arbitrarily their qualities.
    Previous blockchain-based works adopt on-chain verification to avoid similar issues~\cite{lu2018zebralancer,lu2020dragoon}. 
    We adopt a different approach; all proofs and responses are uploaded on-chain, but all verifications happen off-chain.
    Thus, to combat such behavior we can pair \avecq with a ``bounty-hunter'' protocol, where blockchain participants are incentivized to verify all requester-provided proofs and reap additional rewards upon discovering rejecting ones.

    We denote by $Adv^\mathcal{G}_x(\adv)$ the advantage that the adversary $\adv$ has in winning the game $\mathcal{G}_x$.
    Similarly we denote by $Adv^\textsf{C-Hiding}(\adv)$ the advantage $\adv$ has in breaking the hiding property of the commitment scheme \textsf{C}, 
    by $Adv^{H\textsf{-CR}}(\adv)$ breaking the collision-resistance property of $H$, 
    and by $Adv^\textsf{ProveQual-ZK}(\adv)$ breaking the zero-knowledge property of \textsc{ProveQual}.

\section{zk-SNARKs Specifications for $\mathsf{AVecQ}$}\label{sec:app-zk-snarks}
Figure~\ref{fig:zk-snarks} includes all the parameters and formal languages supported by the SNARKs used in $\mathsf{AVecQ}$.

\smallskip\noindent\underline{\textsc{ProveQual}}. For a task $\tau$, each worker $w_i$ generates a proof $\pi_{o_i}^\tau$ attesting to the correctness of its quality $q_i^\tau$ using \textsc{ProveQual}. The statement $\Vec{x}_i$ comprises the MT root, requester, and RA's public keys, the re-randomized commitment of $w_i$'s current quality, the quality ``tag'' and encryption of $w_i$'s answer and address for reimbursement. The corresponding witness $\Vec{\omega}_i$ consists of $w_i$'s certificate, its secret identifier, quality and corresponding randomness in plaintext, quality commitment, plaintext answer, and public address followed by the MT path. \textsc{ProveQual}'s language-specific checks prove the correctness of  $\Vec{x}_i$ while simultaneously guarding $w_i$'s response against free-rider attacks. As mentioned in Section~\ref{subsec:zk-snarks}, these checks include certificate verification $(\textsf{EdDSAver})$, quality verification $(\textsf{QualVer})$, correctness of the answer and address encryptions $(\textsf{ValidEnc})$, quality ``tag'' verification $(\textsf{HashComVer})$, MT membership proof $(\textsf{MTPathVer})$, and any task-specific check with $\textsf{TaskVer}$.

We can trivially see that these checks together ensure the validity of $\Vec{x}_i$ in \textsc{ProveQual}. We also remark that \textsc{ProveQual} assists in safeguarding \avecq against free-rider attacks. By including the correctness check for $w_i$'s reimbursement address in \textsc{ProveQual}, we ensure that an adversary copying $w_i$'s response cannot change the reimbursement address (without breaking the SNARK's soundness property). As such, no adversary has an incentive to launch free-riding attacks. Lastly, with Figure~\ref{fig:exp-provequal}, we also highlight the efficiency of \textsc{ProveQual}'s design. The proving time for the requester, with $>1$M responses (or depth $>20$), is 18-25 seconds.

Workers use \textsc{ProveQual} to generate proofs for their responses. In comparison, a task's requester uses \textsc{AuthCalc}, \textsc{AuthQual}, and \textsc{AuthValue} to attest to the correctness of the final answer, a worker's quality and its proximity to the final answer, respectively. 

\smallskip\noindent\underline{\textsc{AuthCalc}}. Each task's requester uses \textsc{AuthCalc} to generate the proof $\pi_{a_\phi}^\tau$ to attest the correctness of the final answer $a_\phi^\tau$. The statement $\Vec{x}_{AC}$ includes the encryption of $a_\phi^\tau$, the set of all encrypted responses and the public key $pk_R$; while the witness $\Vec{\omega}_{AC}$ comprises the requester's secret key $sk_R$. Each task's policy is hard-coded in the SNARK. The language makes up the following two checks: (i) \textsf{ValidKeyPair}, which ensures that $(pk_R,sk_R)$ are valid key pair and (ii) \textsf{FinAnsVer}, which checks if the final answer is correctly computed given the set of all encrypted responses. 

Figure~\ref{fig:exp-authcalc1} and Figure~\ref{fig:exp-authcalc2} shows the practicality of \textsc{AuthCalc}. E.g., the SNARK can generate proofs for popular crowdsourcing policies for $1$K workers and $64$ choices in $<15$ minutes.

\smallskip\noindent\underline{\textsc{AuthQual} and \textsc{AuthValue}}. The requester uses these SNARKs to generate the proofs $\pi_{q_i}^\tau$
and $\pi_{a_i}^\tau$. These attest to the correctness of $w_i$'s updated quality and the proximity of their answer with the final answer. Note that the quality update rule and worker answer policies are hard-coded in the SNARK. \textsf{AuthValue}'s statement $\Vec{x}_{AV}$ includes encryption of the final and $w_i$'s answer and requester public key $pk_R$, while \textsc{AuthQual}'s statement $\Vec{x}_{AQ}$ additionally includes the commitments of the old and updated qualities. Both the witnesses, $\Vec{\omega}_{AV}$ and $\Vec{\omega}_{AQ}$, comprise the secret key $sk_R$. The language for \textsc{AuthValue} comprises checks for the key-pair validity $(\textsf{ValidKeyPair})$ and the correctness of $w_i$'s answer with respect to $a_\phi^\tau$ $(\textsf{EqCheck})$. Likewise, \textsc{AuthQual}'s language includes \textsf{ValidKeyPair} and \textsf{NewQual} which checks the quality update given $w_i$'s answer, $a_\phi^\tau$ and the old and updated quality commitments as inputs. 

Notably, from Table~\ref{tab:2snarks}, both these SNARKs are efficient in proving and verification times for popular crowdsourcing policies. Depending on the policy, the proving time ranges from 2.9-3.6 seconds, while the verification takes $<12.5$ milliseconds.


\appendix

\end{document}